\newtheorem{lemma}{\bf Lemma}
\newtheorem{theorem}{\bf Theorem}
\newcounter{mytempeqncnt}
\newcommand*{\rom}[1]{\expandafter\@slowromancap\romannumeral #1@}
\begin{document}
\title{\textbf{Distributed Detection in Tree Topologies with Byzantines}}
\author{Bhavya~Kailkhura,~\IEEEmembership{Student Member,~IEEE}, Swastik~Brahma,~\IEEEmembership{Member,~IEEE}, Yunghsiang~S. Han,~\IEEEmembership{Fellow,~IEEE}, Pramod~K.~Varshney,~\IEEEmembership{Fellow,~IEEE}
\thanks{Some related preliminary work was presented at the International Conference on Computing, Networking and Communications Workshops (ICNC-2013), San Diego,
CA, January 2013.}
\thanks{ 
B. Kailkhura, S. Brahma and P. K. Varshney are with Department of EECS, Syracuse University, Syracuse, NY 13244. (email: bkailkhu@syr.edu; skbrahma@syr.edu; varshney@syr.edu)}
\thanks{Y. S. Han is with EE Department, National Taiwan University of Science and Technology, Taiwan, R. O. C. (email: yshan@mail.ntust.edu.tw)}}
\date{}
\maketitle

\begin{abstract}
In this paper, we consider the problem of distributed
detection in tree topologies in the presence of Byzantines.
The expression for minimum attacking power required by the Byzantines to blind the fusion center (FC) is obtained. More specifically, we show that when more than a certain fraction of individual node decisions are falsified, the decision fusion scheme becomes completely incapable. We obtain closed form expressions for the optimal attacking strategies that minimize the detection error exponent at the FC. We also look at the possible counter-measures from the FC's perspective to protect the network from these Byzantines. We formulate the robust topology design problem as a bi-level program and provide an efficient algorithm to solve it. We also provide some numerical results to gain insights
into the solution.
\end{abstract}
\begin{keywords}
Distributed Detection, Byzantine Attacks, Kullback-Leibler Divergence, Bounded Knapsack Problem, Bi-level Programming
\end{keywords}
\section{Introduction}
Distributed detection  has been a well studied topic in the detection theory 
literature \cite{Varshney}\cite{Viswanathan}\cite{veer} and has traditionally focused 
on the parallel network topology. In distributed detection with parallel topology,
nodes make their local decisions regarding the underlying
phenomenon and send them to the fusion center (FC), where
a global decision is made.
Even though the parallel topology has received significant attention, there are many practical situations where parallel topology cannot be implemented due to several factors, such as, the FC being outside the communication range of the nodes and limited energy budget of the nodes~\cite{Lin}. 
In such cases, a multi-hop network is employed, where nodes are organized hierarchically into multiple levels (tree networks).
With intelligent use of resources across levels, tree networks have the potential to provide 
a suitable balance between cost, coverage, functionality, and reliability \cite{purush}.
Some examples of tree networks include wireless sensor and military communication networks.
For instance, the IEEE 802.15.4 (Zigbee) specifications \cite{Aliance} and IEEE 802.22b \cite{802_22bDraft}
can support tree-based topologies. Theses nodes are often deployed in open and unattended
environments and are vulnerable to physical tampering.

In recent years, security issues of distributed inference networks are increasingly being studied.
One typical attack on such networks is a Byzantine attack. While Byzantine attacks 
(originally proposed by \cite{Lamport}) may, in general, refer to many types of malicious
behavior; our focus in this paper is on data-falsification attacks ~\cite{avemp,frag, Rifa, Marano, Rawat, Kailkhura2013, Kailkhura, aditya,a2}. In this type of attack, the compromised node may send false (erroneous) local decisions to the FC 
to degrade the detection performance. This attack becomes more severe in tree topologies where malicious nodes 
can alter local decisions of a large part of the network and cause degradation of system 
performance and may even make the decision fusion schemes to become completely incapable. In this paper, we refer to such a data falsification attacker as a Byzantine. 
\subsection{Related Work}
Although distributed detection has been a very active field of
research in the past~\cite{Varshney,Viswanathan,veer}, security problems in distributed detection
networks gained attention only very recently. In \cite{Marano}, the authors considered the problem of distributed detection in the presence of Byzantines for a  parallel topology and determined the optimal attacking strategy which minimizes the detection error exponent. They assumed that the Byzantines know the true hypothesis, which obviously is not satisfied in practice but does provide a bound. 
In \cite{Rawat}, the authors analyzed the same problem in the context of collaborative spectrum sensing. They relaxed the  assumption of perfect knowledge of the hypotheses by assuming that the Byzantines obtain knowledge about the true hypotheses from their own sensing observations.

The above work~\cite{Marano,Rawat} addresses the issue of Byzantines from the attacker's perspective. Schemes to mitigate the effect of Byzantines have also been proposed in the literature. In~\cite{Rawat}, the authors proposed a simple scheme to identify the Byzantines. The idea was to maintain a reputation metric for every node by comparing each node's local decision to the global decision made at the FC using the majority rule. 
In~\cite{aditya}, the authors proposed another scheme to mitigate the effect of Byzantines in a parallel topology. The idea behind the proposed identification scheme is to compare every node's observed behavior over time with the expected behavior of an honest node. The nodes whose observed behavior is sufficiently far from the expected behavior are tagged as Byzantines and this information is employed while making a decision at the FC. 
In \cite{a2}, the authors investigated the problem of distributed detection in the presence of different types of
Byzantine nodes. Each Byzantine type corresponds to a different operating point and, therefore, the problem of identifying different Byzantine nodes along with their operating points was considered. Once the Byzantine operating points are estimated, this information was utilized by the FC to improve global detection performance.
The problem of designing the optimal fusion rule and the local sensor thresholds with Byzantines for a parallel topology was considered in~\cite{Kailkhura}. 
\subsection{Main Contributions}
All the approaches discussed so far consider distributed detection with Byzantines for parallel topologies. In contrast to previous work, we study the problem of distributed detection with Byzantines for tree topologies. More specifically, we address the
problem of distributed detection in perfect $a$-ary tree networks\footnote{For previous works on perfect $a$-ary tree networks, please see~\cite{Gure},~\cite{sab},~\cite{Pad}.} in the presence of Byzantine attacks (data falsification attcks). We assume that the cost of attacking nodes at different levels is different and analyze the problem under this assumption. In our preliminary work on this problem~\cite{Kailkhura2013}, we analyzed the  problem only from an attacker's perspective assuming that the honest and Byzantine nodes are identical in terms of their
detection performance. In our current work, we significantly extend our previous work and investigate the problem from both the attacker's and the FC's perspective. For the analysis of the optimal attack, we allow Byzantines to have different detection performance than the honest nodes and, therefore, provide a more general and comprehensive analysis of the problem compared to our previous work~\cite{Kailkhura2013}.
The main contributions of this paper are as follows.
\begin{itemize}
\item We obtain a closed form expression for the minimum attacking power required by the Byzantines to blind the FC in a tree network and show that when more than a certain fraction of individual node decisions are falsified, the decision fusion scheme becomes completely incapable.
\item When the fraction of Byzantines is not
sufficient to blind the FC, we provide closed form expressions for the optimal attacking strategies for the Byzantines that most degrade the detection performance.
\item We also look at the problem from the network designer's (FC) perspective. More specifically, we formulate the robust tree topology design problem as a bi-level program and provide an efficient algorithm to solve it, which is guaranteed to find an optimal solution, if one exists. 
\end{itemize}

The rest of the paper is organized as follows.
Section~\ref{sec2} introduces our system model.
In Section~\ref{sec3}, we study the problem from Byzantine's perspective and provide closed form expressions for optimal attacking strategies.
In Section~\ref{sec4}, we formulate the robust topology design problem as a bi-level program and provide an efficient algorithm to solve it in polynomial time. 
Finally, Section~\ref{sec6} concludes the paper.
\section{System Model}
\label{sec2}
\begin{figure}[t]
  \centering
    \includegraphics[height=2.5in, width=!]{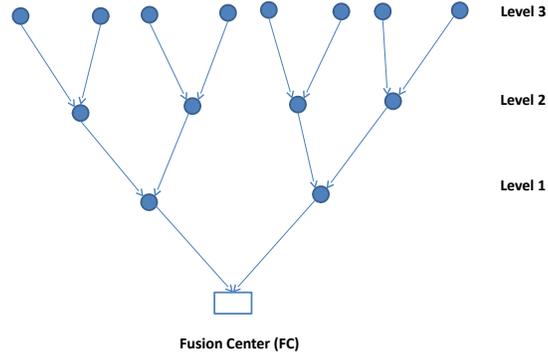}
    \vspace*{-0.3in}
    \caption{A distributed detection system organized as a perfect binary tree $T(3,\;2)$ is shown as an example.}\label{syst}
\vspace*{-0.1in}
\end{figure}
We consider a distributed detection system with the topology of a perfect $a$-tree $T(K,\;a)$ rooted at the FC (See Fig.~\ref{syst}). A perfect $a$-tree is an $a$-ary tree in which all the leaf nodes are at the same depth and all the internal nodes have degree `$a$'. $T(K,\;a)$ has a set $\mathcal{N}=\{\mathbb{N}_{k}\}_{k=1}^{K}$ of transceiver nodes, where $|\mathbb{N}_{k}|=N_{k}=a^k$ is the total number of nodes at level (or depth) $k$.
We assume that the depth of the tree is $K>1$ and the number of children is $a\geq 2$. The total number of nodes in the network is denoted as $\sum_{k=1}^K N_{k}=N$. $\mathcal{B}=\{\mathbb{B}_{k}\}_{k=1}^{K}$ denotes the set of Byzantine nodes with $|\mathbb{B}_{k}|=B_{k}$, where $\mathbb{B}_{k}$ is the set of Byzantines at level $k$.
We assume that the FC is not aware of the exact set of Byzantine nodes and considers each node at level $k$ to
be Byzantine with a certain probability $\alpha_{k}$. 
In practice, nodes operate with very limited energy and, therefore, it
is reasonable to assume that the packet IDs (or source IDs) are not forwarded in the tree to save
energy. Moreover, even in cases where the packet IDs (or source IDs) are forwarded, notice that the packet IDs (or source IDs) can be tempered too, thereby preventing the FC to be deterministically aware of the source of a message. Therefore, we consider that the FC looks at messages coming from nodes in a probabilistic manner and considers each received bit to originate from nodes at level $k$ with certain probability $\beta_{k}\in [0,1]$.
This also implies that, from the FC's perspective, received bits are identically distributed. For a $T(K,\;a)$, 
$$\beta_{k}=\frac{a^k}{N}.$$ 

\subsection{Distributed detection in a tree topology}
\label{Network}

We consider a binary hypothesis testing problem with the two hypotheses $H_{0}$ (signal is absent)
and $H_{1}$ (signal is present).
Each node $i$ at level $k$ acts as a source in that it makes a one-bit local decision $v_{k,i} \in \{ 0,1\}$
and sends $u_{k,i}$ to its parent node at level $k-1$, where $u_{k,i} = v_{k,i}$
if $i$ is an uncompromised (honest) node, but for a compromised (Byzantine) node $i$,
$u_{k,i}$ need not be equal to $v_{k,i}$. 
It also receives the decisions $u_{k',j}$ of all successors $j$
at levels $k' \in [k+1,K]$, which are forwarded to $i$ by its immediate children. It 
forwards\footnote{For example, IEEE 802.16j mandates tree forwarding and IEEE 802.11s standardizes a tree-based routing protocol.} these received decisions along with $u_{k,i}$ to its parent node at level $k-1$. If node $i$ is a Byzantine, then it might alter 
these received decisions before forwarding.
We assume error-free communication channels between children and the parent nodes.
We denote the probabilities of detection and false alarm of a honest node $i$ at level $k$ by $P_{d}^H=P(v_{k,i}=1|H_{1},i\notin \mathbb{B}_{k})$ and $P_{fa}^H=P(v_{k,i}=1|H_{0},i\notin \mathbb{B}_{k})$, respectively. Similarly, the probabilities of detection and false alarm of a Byzantine node $i$ at level $k$ are denoted by $P_{d}^B=P(v_{k,i}=1|H_{1},i\in \mathbb{B}_{k})$ and $P_{fa}^B=P(v_{k,i}=1|H_{0},i\in \mathbb{B}_{k})$, respectively. 

\subsection{Byzantine attack model}
\label{model}
Now a mathematical model for the Byzantine attack is presented.
If a node 
is honest, then it forwards its own decision and received decisions without altering them.
However, a
Byzantine node, in order to undermine the network performance, may alter its decision as well as received decisions from its children prior to transmission.
We define the following strategies $P_{j,1}^H$, $P_{j,0}^H$ and $P_{j,1}^B$, $P_{j,0}^B$ ($j \in \{0,1\}$)
for the honest and Byzantine nodes, respectively:\\
Honest nodes:
\begin{equation}
P_{1,1}^H=1-P_{0,1}^H=P^{H}(x=1|y=1)=1\label{P11H}
\end{equation}
\begin{equation}
P_{1,0}^H=1-P_{0,0}^H=P^{H}(x=1|y=0)=0\label{P10H}
\end{equation}
				
\noindent
Byzantine nodes:
\begin{equation}
P_{1,1}^B=1-P_{0,1}^B=P^{B}(x=1|y=1)\label{P11B}
\end{equation}
\begin{equation}
P_{1,0}^B=1-P_{0,0}^B=P^{B}(x=1|y=0)\label{P10B}
\end{equation}
where $P(x=a|y=b)$ is the probability that a node sends $a$ to its parent when it receives $b$ from its child or its actual decision is $b$.
Furthermore, we assume that if a node (at any level) is a Byzantine then none of its ancestors  are Byzantines; otherwise, the effect of a Byzantine due to other Byzantines on the same path may be nullified  (e.g., Byzantine ancestor re-flipping the already flipped decisions of its successor).
 This means that any path from a leaf node to the FC will have at most one Byzantine. Thus, we have, 
$\sum_{k=1}^{K}\alpha_{k}\leq 1$ since the average number of Byzantines along any path from a leaf to the root cannot be greater than $1$.

\subsection{Performance metric}
\label{sec:metric}
The Byzantine attacker always wants to degrade the detection performance at the FC as much as possible; in contrast, the FC wants to maximize the detection performance. In this work, we employ the Kullback-Leibler divergence (KLD) \cite{Kullback} to be the network performance metric that characterizes detection performance. The KLD is a frequently used information-theoretic “distance” measure to characterize  detection performance. By Stein's lemma, we know that in the Neyman-Pearson setup for a fixed missed detection probability, the false alarm probability obeys the asymptotics
\begin{equation}
\lim_{N \rightarrow \infty} \frac{\ln P_F}{N}=-D,\;\text{for a fixed}\; P_M,
\end{equation}
where $P_M$, $P_F$ are missed detection and false alarm probabilities, respectively. 
The KLD between the distributions $\pi_{j,0}=P(z=j|H_{0})$ and $\pi_{j,1}=P(z=j|H_{1})$ can be expressed as
\begin{equation}
\label{klde}
D(\pi_{j,1}||\pi_{j,0})=\sum_{j \in \{0,1\}}P(z=j|H_{1})\log \dfrac{P(z=j|H_{1})}{P(z=j|H_{0})}.
\end{equation}

\begin{figure*}[t]
\normalsize
\setcounter{mytempeqncnt}{\value{equation}}

\begin{eqnarray}
P(z_{i}=j|H_{0})&=&\left[\sum_{k=1}^{K}\beta_{k}\left(\sum_{i=1}^{k}\alpha_{i}\right)\right][P_{j,0}^{B}(1-P_{fa}^B)+P_{j,1}^{B}P_{fa}^B]\nonumber\\
&+&\left[\sum_{k=1}^{K}\beta_{k}\left(1-\sum_{i=1}^{k}\alpha_{i}\right)\right][P_{j,0}^{H}(1-P_{fa}^H)+P_{j,1}^{H}P_{fa}^H]
\label{eq1}
\end{eqnarray}
\vspace*{-0.1in}
\begin{eqnarray}
P(z_{i}=j|H_{1})&=&\left[\sum_{k=1}^{K}\beta_{k}\left(\sum_{i=1}^{k}\alpha_{i}\right)\right][P_{j,0}^{B}(1-P_{d}^B)+P_{j,1}^{B}P_{d}^B]\nonumber\\
&+&\left[\sum_{k=1}^{K}\beta_{k}\left(1-\sum_{i=1}^{k}\alpha_{i}\right)\right][P_{j,0}^{H}(1-P_{d}^H)+P_{j,1}^{H}P_{d}^H]
\label{eq2}
\end{eqnarray}
\hrulefill
\vspace*{-0.23in}
\end{figure*}
For a $K$-level network, distributions of received decisions at the FC
$z_{i}$, $i=1,..,N$, under $H_{0}$ and $H_{1}$ are given by \eqref{eq1} and \eqref{eq2}, respectively.
In order to make the analysis tractable, we assume that the  network designer attempts to maximize the KLD of each node as seen by the FC. On the other hand, the attacker attempts to minimize the KLD of each node as seen by the FC.

Next, we explore the optimal attacking strategies for the Byzantines that most degrade the detection performance by minimizing KLD.
\section{Optimal Byzantine Attack}
\label{sec3}
As discussed earlier, the Byzantine nodes attempt to make their KL divergence as small as possible. Since the KLD is always non-negative,
Byzantines attempt to choose $P(z=j|H_{0})$ and $P(z=j|H_{1})$ such that KLD is zero. In this case, an adversary can make the data that the FC receives from the nodes such that no information is conveyed. This is possible when
\begin{equation}
P(z=j|H_{0})=P(z=j|H_{1})\qquad \forall j\in \{0,1\}.
\label{eq8}
\end{equation}
Substituting \eqref{eq1} and \eqref{eq2} in \eqref{eq8} and after simplification, the condition to make the $KLD=0$ for a $K$-level network can be expressed as
\begin{equation}
P_{j,1}^{B}-P_{j,0}^{B}=\frac{\sum_{k=1}^{K}[\beta_{k}(1-\sum_{i=1}^{k}\alpha_{i})]}{\sum_{k=1}^{K}[\beta_{k}(\sum_{i=1}^{k}\alpha_{i})]]}\dfrac{P_{d}^H-P_{fa}^H}{P_{d}^B-P_{fa}^B}(P_{j,0}^{H}-P_{j,1}^{H}).
\end{equation}
From \eqref{P11H} to \eqref{P10B}, we have
\begin{equation}
P_{0,1}^{B}-P_{0,0}^{B}=\frac{\sum_{k=1}^{K}[\beta_{k}(1-\sum_{i=1}^{k}\alpha_{i})]}{\sum_{k=1}^{K}[\beta_{k}(\sum_{i=1}^{k}\alpha_{i})]]}\dfrac{P_{d}^H-P_{fa}^H}{P_{d}^B-P_{fa}^B}=-(P_{1,1}^{B}-P_{1,0}^{B}).
\end{equation}
Hence, the attacker can degrade detection performance by intelligently choosing $(P_{0,1}^{B}, P_{1,0}^{B})$, which are dependent on $\alpha_{k}$, for $k=1, \cdots, K$. Observe that, 
$$0\le P_{0,1}^{B}-P_{0,0}^{B}$$ since $\sum_{i=1}^{k}\alpha_{i}\le 1$ for $k\le K$. To make $KLD=0$, we must have
$$P_{0,1}^{B}-P_{0,0}^{B}\le 1$$
such that $(P_{j,1}^B,P_{j,0}^B)$ becomes a valid probability mass function. Notice that, when $P_{0,1}^{B}-P_{0,0}^{B}>1$ there does not exist any attacking probability distribution $(P_{j,1}^B,P_{j,0}^B)$ that can make $KLD = 0$. In the case of $P_{0,1}^{B}-P_{0,0}^{B}=1$, there exists a unique solution $(P_{1,1}^B,P_{1,0}^B)=(0,1)$ that can make $KLD = 0$. For the $P_{0,1}^{B}-P_{0,0}^{B}<1$ case, there exist an infinite number of attacking probability distributions $(P_{j,1}^B,P_{j,0}^B)$ which can make $KLD = 0$.

By further assuming that the honest and Byzantine nodes are identical in terms of their detection performance, i.e., $P_{d}^H=P_{d}^B$ and $P_{fa}^H=P_{fa}^B$,  the above condition to blind the FC reduces to
$$\frac{\sum_{k=1}^{K}[\beta_{k}(1-\sum_{i=1}^{k}\alpha_{i})]}{\sum_{k=1}^{K}[\beta_{k}(\sum_{i=1}^{k}\alpha_{i})]]}\le 1$$
which is equivalent to
\begin{equation}
\sum_{k=1}^{K}[\beta_{k}(1-2(\sum_{i=1}^{k}\alpha_{i}))]\leq 0.\label{alpha-condition}
\end{equation}

Recall that  $\alpha_{k}=\frac{B_{k}}{N_{k}}$ and  $\beta_{k}=\frac{N_{k}}{\sum_{i=1}^{K}N_{i}}$. Substituting $\alpha_{k}$ and $\beta_{k}$  into \eqref{alpha-condition} and simplifying the result,  we have the following theorem.

\begin{theorem}
\label{th1}
In a tree network with $K$ 
levels, there exists an attacking probability distribution $(P_{0,1}^{B}, P_{1,0}^{B})$ that can make $KLD=0$, and
thereby blind the FC, if and only if $\{B_{k}\}_{k=1}^{K}$ satisfy
\begin{equation}
\label{fifty}
\sum_{k=1}^K \left(\frac{B_{k}}{N_{k}}\sum_{i=k}^K N_{i}\right)\geq \frac{N}{2}. 
\end{equation}
\end{theorem}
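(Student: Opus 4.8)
The plan is to obtain \eqref{fifty} as a purely algebraic rewriting of the blinding condition \eqref{alpha-condition}, which the analysis preceding the theorem has already shown to be equivalent to the existence of a valid attacking pmf $(P_{j,1}^B,P_{j,0}^B)$ achieving $KLD=0$ under the assumption $P_d^H=P_d^B$ and $P_{fa}^H=P_{fa}^B$. Concretely, the text established that such a distribution exists if and only if $P_{0,1}^B-P_{0,0}^B\le 1$, and that under identical detection performance this collapses to \eqref{alpha-condition}. Hence both directions of the ``if and only if'' reduce to showing that \eqref{alpha-condition} and \eqref{fifty} are literally the same inequality, written in two different sets of variables.

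First I would isolate the normalization. Since $\beta_k=N_k/N$ and $\sum_{k=1}^K N_k=N$, the constant part of \eqref{alpha-condition} collapses to $\sum_{k=1}^K\beta_k=1$. Expanding \eqref{alpha-condition} as $\sum_k\beta_k-2\sum_k\beta_k\sum_{i=1}^k\alpha_i\le 0$ and substituting this normalization turns the condition into the cleaner equivalent form $\sum_{k=1}^K\beta_k\sum_{i=1}^k\alpha_i\ge \tfrac12$. I would then substitute the definitions $\alpha_i=B_i/N_i$ and $\beta_k=N_k/N$ and clear the common factor $1/N$, reducing the goal to $\sum_{k=1}^K N_k\sum_{i=1}^k (B_i/N_i)\ge N/2$.

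The one step that needs genuine care is interchanging the order of summation over the triangular index set $\{(i,k):1\le i\le k\le K\}$. Fixing $i$ and letting $k$ run from $i$ to $K$ gives $\sum_{k=1}^K N_k\sum_{i=1}^k (B_i/N_i)=\sum_{i=1}^K (B_i/N_i)\sum_{k=i}^K N_k$, and relabeling the dummy indices then produces exactly $\sum_{k=1}^K (B_k/N_k)\sum_{i=k}^K N_i\ge N/2$, which is \eqref{fifty}.

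I do not expect a real obstacle here: the statement is an identity between two inequalities, so the work is bookkeeping rather than argument. The only place where a slip is possible is the Dirichlet-type swap of the double sum, where mislabeling the inner and outer limits would accumulate the wrong $N_i$ factors; a cheap sanity check on the case $K=2$ would confirm the rearrangement. I would also note explicitly that every intermediate manipulation—the normalization, the substitution, and the summation swap—is reversible, which is precisely what licenses the ``only if'' direction alongside the ``if'' direction and completes the equivalence.
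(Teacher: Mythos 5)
Your proposal is correct and follows exactly the paper's route: the paper likewise derives the theorem by substituting $\alpha_k=B_k/N_k$ and $\beta_k=N_k/N$ into \eqref{alpha-condition} and simplifying (the paper even records the intermediate form $\sum_{k=1}^{K}\beta_{k}\sum_{i=1}^{k}\alpha_{i}\geq 0.5$ right after the theorem), with the ``if and only if'' inherited from the preceding discussion of when a valid pmf with $P_{0,1}^{B}-P_{0,0}^{B}\le 1$ exists. Your explicit treatment of the triangular summation swap is the only step the paper leaves implicit, and it is carried out correctly.
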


Dividing both sides of~\eqref{fifty} by $N$, the above condition can be written as $\sum_{k=1}^{K}\beta_{k} \sum_{i=1}^{k}\alpha_{i}\geq 0.5$. 
This implies that to make the FC blind,
$50\% $ or more nodes in the network need to be covered\footnote{Node $i$ at level $k'$ covers all its children at levels $k'+1\leq k\leq K$ and the node $i$ itself and, therefore, the total number of covered nodes by $B_{k'}$, Byzantine at level $k'$, is $\dfrac{B_{k'}}{N_{k'}}.\sum_{i=k'}^K N_{i}$.} by
the Byzantines.
Next, to explore the optimal attacking probability distribution $(P_{0,1}^{B}, P_{1,0}^{B})$ that minimizes $KLD$ when \eqref{alpha-condition} does not hold, we explore the properties of KLD. 

First, we show that attacking with symmetric flipping probabilities is the optimal strategy in the region where the attacker cannot blind the FC. In other words, attacking with $P_{1,0}=P_{0,1}$ is the optimal strategy for the Byzantines. For analytical tractability, we assume
$P_{d}^H=P_{d}^B=P_d$ and $P_{fa}^H=P_{fa}^B=P_{fa}$ in further analysis. 
\begin{lemma}
\label{lem}
In the region where the attacker cannot blind the FC, the optimal attacking strategy comprises of symmetric flipping probabilities. More specifically, any non zero deviation $\epsilon_i\in(0,p]$ in flipping probabilities $(P_{0,1}^{B}, P_{1,0}^{B})=(p-\epsilon_1,p-\epsilon_2)$, where $\epsilon_1\neq\epsilon_2$, will result in increase in the KLD.
\end{lemma}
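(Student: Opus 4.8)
The plan is to reduce the lemma to a monotonicity statement about the KLD viewed as a function of the two flipping probabilities, and then to sign a single pair of partial derivatives. Since the perturbation $(P_{0,1}^{B},P_{1,0}^{B})=(p-\epsilon_1,p-\epsilon_2)$ with $\epsilon_i\in(0,p]$ \emph{decreases} both flipping probabilities relative to the symmetric point $(p,p)$, it suffices to show that the KLD strictly decreases whenever either flipping probability is increased; monotonicity then forces the KLD to rise under any such deviation, and in particular under every asymmetric one $\epsilon_1\neq\epsilon_2$.

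First I would make the dependence explicit. Writing $\pi_{1,0}=P(z=1|H_0)$ and $\pi_{1,1}=P(z=1|H_1)$ and specializing \eqref{eq1}--\eqref{eq2} to the identical-performance case $P_d^H=P_d^B=P_d$, $P_{fa}^H=P_{fa}^B=P_{fa}$ (together with \eqref{P11H}--\eqref{P10B} for the honest nodes), both $\pi_{1,0}$ and $\pi_{1,1}$ become affine in $(P_{0,1}^{B},P_{1,0}^{B})$. The key structural fact I would record is that their difference collapses to $\pi_{1,1}-\pi_{1,0}=(P_d-P_{fa})\bigl(1-\gamma(P_{1,0}^{B}+P_{0,1}^{B})\bigr)$, where $\gamma=\sum_{k=1}^{K}\beta_k\sum_{i=1}^{k}\alpha_i$ is the covered fraction. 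In the regime where the FC cannot be blinded, Theorem~\ref{th1} gives $\gamma<1/2$, so $\gamma(P_{1,0}^{B}+P_{0,1}^{B})<1$ and hence $\pi_{1,1}>\pi_{1,0}$ throughout the feasible square. This strict ordering is what lets me sign everything below.

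Next I would differentiate the binary KLD \eqref{klde}. A direct computation gives $\partial D/\partial\pi_{1,0}=(\pi_{1,0}-\pi_{1,1})/[\pi_{1,0}(1-\pi_{1,0})]<0$ and $\partial D/\partial\pi_{1,1}=\ln\frac{\pi_{1,1}(1-\pi_{1,0})}{\pi_{1,0}(1-\pi_{1,1})}>0$, the signs being immediate from $\pi_{1,1}>\pi_{1,0}$. Using the affine maps, $\partial\pi_{1,0}/\partial\epsilon_1=\gamma P_{fa}$, $\partial\pi_{1,1}/\partial\epsilon_1=\gamma P_d$, $\partial\pi_{1,0}/\partial\epsilon_2=-\gamma(1-P_{fa})$, $\partial\pi_{1,1}/\partial\epsilon_2=-\gamma(1-P_d)$, the chain rule yields $\partial D/\partial\epsilon_1=\gamma\bigl[P_{fa}\,\partial_{\pi_{1,0}}D+P_d\,\partial_{\pi_{1,1}}D\bigr]$ and $\partial D/\partial\epsilon_2=-\gamma\bigl[(1-P_{fa})\,\partial_{\pi_{1,0}}D+(1-P_d)\,\partial_{\pi_{1,1}}D\bigr]$. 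Each is a weighted sum of a negative and a positive term, so proving both are positive reduces to the single sandwich $\tfrac{P_{fa}}{P_d}<\partial_{\pi_{1,1}}D\big/|\partial_{\pi_{1,0}}D|<\tfrac{1-P_{fa}}{1-P_d}$.

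The hard part will be establishing this sandwich. Because $P_{fa}<P_d$ the target interval is nonempty, so the inequalities are consistent; the obstacle is that the numerator carries a logarithm while the denominator is rational. I would control the logarithm with the elementary bounds $\tfrac{t-1}{t}\le\ln t\le t-1$ applied to $t=\pi_{1,1}/\pi_{1,0}$ and $t=(1-\pi_{1,0})/(1-\pi_{1,1})$, giving $a+\frac{a(1-a)}{b}\le \partial_{\pi_{1,1}}D\big/|\partial_{\pi_{1,0}}D|\le (1-a)+\frac{a(1-a)}{1-b}$ with $a=\pi_{1,0}$, $b=\pi_{1,1}$, and then close the two inequalities by substituting the explicit affine forms of $a,b$ and invoking $\gamma<1/2$ and $P_{fa}<P_d$. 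Once both partials are shown positive along the path, strict monotonicity delivers $D(p-\epsilon_1,p-\epsilon_2)>D(p,p)$ for every nonzero deviation, proving the lemma. As a consistency check, the same monotonicity pins the global KLD minimizer at $P_{0,1}^{B}=P_{1,0}^{B}=1$, a symmetric strategy, matching the data-processing intuition that heavier symmetric flipping draws the two fusion-center distributions closer together.
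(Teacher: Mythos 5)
Your proposal is correct and is essentially the paper's own argument in a slightly cleaner packaging: your ``sandwich'' $\tfrac{P_{fa}}{P_d}<\partial_{\pi_{1,1}}D/|\partial_{\pi_{1,0}}D|<\tfrac{1-P_{fa}}{1-P_d}$ is exactly the paper's two conditions \eqref{log} and \eqref{log1} rearranged, and you close it with the same logarithm bounds $\tfrac{x-1}{x}\le\log x\le x-1$ and the same auxiliary facts ($P_{fa}\pi_{1,1}<P_d\pi_{1,0}$ and $\tfrac{1-\pi_{1,1}}{1-\pi_{1,0}}>\tfrac{1-P_d}{1-P_{fa}}$) that the paper invokes. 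The only cosmetic difference is that you differentiate through $(\pi_{1,0},\pi_{1,1})$ via the chain rule rather than directly in $\epsilon$ along the two axis-parallel lines, and you make explicit the coordinate-wise monotonicity that the paper leaves implicit in its final sentence.
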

\begin{proof}
Let us denote, $P(z=1|H_1)=\pi_{1,1}$, $P(z=1|H_0)=\pi_{1,0}$ and $t=\sum_{k=1}^{K}\beta_{k} \sum_{i=1}^{k}\alpha_{i}$. Notice that, in the region where the attacker cannot blind the FC, the parameter $t<0.5$. To prove the lemma, we first show that any positive deviation $\epsilon\in(0,p]$ in flipping probabilities $(P_{1,0}^{B}, P_{0,1}^{B})=(p,p-\epsilon)$ will result in an increase in the KLD. After plugging in $(P_{1,0}^{B}, P_{0,1}^{B})=(p,p-\epsilon)$ in \eqref{eq1} and \eqref{eq2}, we get
\begin{eqnarray}
\pi_{1,1}&=&t(p-P_d(2p-\epsilon))+P_d\\\label{eeq1}
\pi_{1,0}&=&t(p-P_{fa}(2p-\epsilon))+P_{fa}.\label{eeq2}
\end{eqnarray}
Now we show that the KLD, $D$, as give in \eqref{klde} is a monotonically increasing function of the parameter $\epsilon$ or in other words, $\dfrac{dD}{d\epsilon}>0$.
\begin{eqnarray}
\dfrac{dD}{d\epsilon}&=&
\pi_{1,1}\left(\dfrac{\pi_{1,1}'}{\pi_{1,1}}-\dfrac{\pi_{1,0}'}{\pi_{1,0}}\right)+\pi_{1,1}' \log \dfrac{\pi_{1,1}}{\pi_{1,0}}\nonumber\\
&+&(1-\pi_{1,1})\left(\dfrac{\pi_{1,0}'}{1-\pi_{1,0}}-\dfrac{\pi_{1,1}'}{1-\pi_{1,1}}\right)
-\pi_{1,1}'\log \dfrac{1-\pi_{1,1}}{1-\pi_{1,0}}\label{con1}
\end{eqnarray}
where $\dfrac{d\pi_{1,1}}{d\epsilon}=\pi_{1,1}'=tP_d$ and $\dfrac{d\pi_{1,0}}{d\epsilon}=\pi_{1,0}'=tP_{fa}$ and $t$ is the fraction of covered nodes by the Byzantines. After rearranging the terms in the above equation, the condition $\dfrac{dD}{d\epsilon}>0$ becomes 
\begin{eqnarray}
\dfrac{1-\pi_{1,1}}{1-\pi_{1,0}}+\dfrac{P_d}{P_{fa}}\log\dfrac{\pi_{1,1}}{\pi_{1,0}}>\dfrac{\pi_{1,1}}{\pi_{1,0}}+\dfrac{P_d}{P_{fa}}\log\dfrac{1-\pi_{1,1}}{1-\pi_{1,0}}.\label{log}
\end{eqnarray}
Since $P_d>P_{fa}$ and $t<0.5$, $\pi_{1,1}>\pi_{1,0}$. It can also be proved that $\dfrac{P_{fa}}{P_{d}}\dfrac{\pi_{1,1}}{\pi_{1,0}}<1$. Hence, we have
\begin{equation*}
1+(\pi_{1,1}-\pi_{1,0})>\dfrac{P_{fa}}{P_{d}}\dfrac{\pi_{1,1}}{\pi_{1,0}}
\end{equation*}
which is equivalent to
\begin{equation}
\dfrac{1-\pi_{1,1}}{1-\pi_{1,0}}+\dfrac{P_d}{P_{fa}}
\left(1-\dfrac{\pi_{1,0}}{\pi_{1,1}}\right)
>\dfrac{\pi_{1,1}}{\pi_{1,0}}+\dfrac{P_d}{P_{fa}}
\left(\dfrac{1-\pi_{1,1}}{1-\pi_{1,0}}-1\right).\label{eq-1}
\end{equation}
Applying the logarithm inequality $(x-1)\geq\log x \geq \dfrac{x-1}{x}$, for $x>0$ to \eqref{eq-1}, one can prove that condition \eqref{log} is true.

Similarly, we can show that any non zero deviation $\epsilon\in(0,p]$ in flipping probabilities $(P_{1,0}^{B}, P_{0,1}^{B})=(p-\epsilon,p)$ will result in an increase in the KLD, i.e.,
$\dfrac{dD}{d\epsilon}>0$,
 or
\begin{eqnarray}
\dfrac{\pi_{1,1}}{\pi_{1,0}}+\dfrac{1-P_d}{1-P_{fa}}\log\dfrac{1-\pi_{1,1}}{1-\pi_{1,0}}>\dfrac{1-\pi_{1,1}}{1-\pi_{1,0}}+\dfrac{1-P_d}{1-P_{fa}}\log\dfrac{\pi_{1,1}}{\pi_{1,0}}.\label{log1}
\end{eqnarray} 
Since $P_d>P_{fa}$ and $t<0.5$, $\pi_{1,1}>\pi_{1,0}$. It can also be proved that $\dfrac{1-\pi_{1,1}}{1-\pi_{1,0}}>\dfrac{1-P_d}{1-P_{fa}}$. Hence, we have
\begin{small}
\begin{eqnarray}
&&
\dfrac{1-\pi_{1,1}}{1-\pi_{1,0}}>\dfrac{1-P_d}{1-P_{fa}}\left[1-(\pi_{1,1}-\pi_{1,0})\right]\\
&\Leftrightarrow&
\dfrac{1}{\pi_{1,1}-\pi_{1,0}}\left[\dfrac{\pi_{1,1}}{\pi_{1,0}}-\dfrac{1-\pi_{1,1}}{1-\pi_{1,0}}\right]>\dfrac{1-P_d}{1-P_{fa}}\left[\dfrac{1}{\pi_{1,0}}+\dfrac{1}{1-\pi_{1,1}}\right]\label{eq-7}\\
&\Leftrightarrow&
\dfrac{\pi_{1,1}}{\pi_{1,0}}-\dfrac{1-\pi_{1,1}}{1-\pi_{1,0}}>\dfrac{1-P_d}{1-P_{fa}}\left[\dfrac{\pi_{1,1}-\pi_{1,0}}{\pi_{1,0}}+\dfrac{\pi_{1,1}-\pi_{1,0}}{1-\pi_{1,1}}\right]\\
&\Leftrightarrow&
\dfrac{\pi_{1,1}}{\pi_{1,0}}+\dfrac{1-P_d}{1-P_{fa}}
\left[1-\dfrac{1-\pi_{1,0}}{1-\pi_{1,1}}\right]
>\dfrac{1-\pi_{1,1}}{1-\pi_{1,0}}
+\dfrac{1-P_d}{1-P_{fa}}\left[\dfrac{\pi_{1,1}}{\pi_{1,0}}-1\right].\label{eq-11}
\end{eqnarray}
\end{small}
Applying the logarithm inequality $(x-1)\geq\log x \geq \dfrac{x-1}{x}$, for $x>0$ to \eqref{eq-11}, one can prove that condition \eqref{log1} is true. Condition \eqref{log} and \eqref{log1} imply that any non zero deviation $\epsilon_i\in(0,p]$ in flipping probabilities $(P_{0,1}^{B}, P_{1,0}^{B})=(p-\epsilon_1,p-\epsilon_2)$ will result in an increase in the KLD.
\end{proof}

In the next theorem, we present a closed form expression for the optimal attacking probability distribution $(P_{j,1}^{B}, P_{j,0}^{B})$ that minimizes $KLD$ in the region where the attacker cannot blind the FC.
\begin{theorem}
\label{th3}
In the region where the attacker cannot blind the FC, the optimal attacking strategy is given by $(P_{0,1}^{B}, P_{1,0}^{B})=(1,1)$.
\end{theorem}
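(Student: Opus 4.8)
The plan is to use Lemma~\ref{lem} to collapse the two-dimensional optimization to a single parameter and then to show that the KLD is non-increasing in that parameter, so that the extreme point $p=1$ is optimal. By Lemma~\ref{lem}, any asymmetry in $(P_{0,1}^{B},P_{1,0}^{B})$ strictly increases the KLD, so an optimal attack must be symmetric, $P_{0,1}^{B}=P_{1,0}^{B}=p$ with $p\in[0,1]$. Setting $\epsilon=0$ in the expressions used in the proof of Lemma~\ref{lem}, the received-bit distributions seen by the FC become $\pi_{1,1}(p)=P_d+tp(1-2P_d)$ and $\pi_{1,0}(p)=P_{fa}+tp(1-2P_{fa})$, where $t=\sum_{k=1}^{K}\beta_{k}\sum_{i=1}^{k}\alpha_{i}<0.5$ in the region under consideration. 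It therefore suffices to show that $D(p)=D(\pi_{1,1}(p)\,\|\,\pi_{1,0}(p))$ is minimized at $p=1$.

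First I would record the elementary facts that drive the argument. A direct subtraction gives $\pi_{1,1}(p)-\pi_{1,0}(p)=(P_d-P_{fa})(1-2tp)$, which is strictly positive for every $p\in[0,1]$ because $P_d>P_{fa}$ and $tp\le t<0.5$; hence $\pi_{1,1}(p)>\pi_{1,0}(p)$ throughout and $D(p)>0$. Reparametrizing by $s=tp$, both coordinates are affine in $s$, and at $s=1/2$ one checks that $\pi_{1,1}=\pi_{1,0}=1/2$. Thus the pair $(\pi_{1,1}(s),\pi_{1,0}(s))$ traces the straight segment from $(P_d,P_{fa})$ (at $s=0$) to the diagonal point $(1/2,1/2)$ (at $s=1/2$), while the feasible range $p\in[0,1]$ corresponds to $s\in[0,t]\subset[0,1/2)$, so the segment stops strictly short of the diagonal. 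This is precisely the analytic counterpart of Theorem~\ref{th1}: the attacker blinds the FC only at $s=1/2$, which requires $t\ge 1/2$.

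The cleanest way to finish is to invoke the joint convexity of the Kullback--Leibler divergence in its pair of arguments. Since $(\pi_{1,1}(s),\pi_{1,0}(s))$ is affine in $s$, the composition $g(s)=D(\pi_{1,1}(s)\,\|\,\pi_{1,0}(s))$ is a convex function of $s$. Moreover $g(s)\ge 0$ for all $s$ and $g(1/2)=D(1/2\,\|\,1/2)=0$, so $s=1/2$ is a global minimizer of $g$. A convex function is non-increasing to the left of any of its minimizers; consequently $g$ is non-increasing on $[0,1/2]$ and, in particular, on the feasible interval $[0,t]$. Hence $D(p)=g(tp)$ attains its minimum over $p\in[0,1]$ at $p=1$, which yields the optimal symmetric attack $(P_{0,1}^{B},P_{1,0}^{B})=(1,1)$.

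The main obstacle is establishing the monotonicity of $D(p)$. If one proceeds by brute force --- differentiating $D$ and trying to sign $dD/dp$ as in the proof of Lemma~\ref{lem} --- the difficulty is that $\pi_{1,1}'=t(1-2P_d)$ and $\pi_{1,0}'=t(1-2P_{fa})$ change sign according to whether $P_d$ and $P_{fa}$ exceed $1/2$, forcing a somewhat delicate case analysis with the logarithm inequality $\frac{x-1}{x}\le\log x\le x-1$. The convexity route above avoids this entirely, at the cost of citing joint convexity of the KLD; the only points that still need care are verifying that the segment passes through $(1/2,1/2)$ exactly at $s=1/2$ and that $t<1/2$ keeps the whole feasible segment strictly on the $\pi_{1,1}>\pi_{1,0}$ side, both of which are the short computations indicated above.
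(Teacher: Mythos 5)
Your proof is correct, but the way you establish monotonicity is genuinely different from the paper's. Both arguments begin identically: invoke Lemma~\ref{lem} to reduce to symmetric flipping $(P_{0,1}^{B},P_{1,0}^{B})=(p,p)$ and write $\pi_{1,1}=P_d+tp(1-2P_d)$, $\pi_{1,0}=P_{fa}+tp(1-2P_{fa})$. The paper then differentiates, reduces $dD/dp<0$ to the inequality \eqref{log2}, and verifies it via the bound $\tfrac{1-P_d}{1-P_{fa}}>\tfrac{1-2P_d}{1-2P_{fa}}$ together with \eqref{eq-7} and the logarithm inequality $(x-1)\geq\log x\geq\tfrac{x-1}{x}$ --- precisely the direct-differentiation route you flag, with the attendant sign delicacies when $P_d$ or $P_{fa}$ straddles $1/2$. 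You instead reparametrize by $s=tp$, observe that $(\pi_{1,1},\pi_{1,0})$ moves affinely from $(P_d,P_{fa})$ at $s=0$ to $(1/2,1/2)$ at $s=1/2$, and invoke joint convexity of the KLD to conclude that $g(s)=D(\pi_{1,1}(s)\,\|\,\pi_{1,0}(s))$ is convex, nonnegative, and vanishes at $s=1/2$, hence is non-increasing on $[0,1/2]\supset[0,t]$; your supporting computations ($\pi_{1,1}-\pi_{1,0}=(P_d-P_{fa})(1-2tp)>0$, the segment hitting the diagonal exactly at $s=1/2$) check out. This is cleaner, avoids the case analysis entirely, and is consonant with the paper itself, which uses the same affine-composition-with-convexity observation later in the proof of Lemma~\ref{equiv}. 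The one thing the derivative computation delivers that bare convexity does not is \emph{strict} decrease (hence uniqueness of the minimizer $p=1$); you can recover this at no extra cost by noting $g(s)>0$ for all $s<1/2$, so that writing $s_2=\lambda s_1+(1-\lambda)\tfrac{1}{2}$ for $s_1<s_2<\tfrac12$ gives $g(s_2)\leq\lambda g(s_1)<g(s_1)$.
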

\begin{proof}
Observe that, in the region where the attacker cannot blind the FC, the optimal strategy comprises of symmetric flipping probabilities $(P_{0,1}^{B}= P_{1,0}^{B}=p)$. The proof is complete if we show that KLD, $D$, is a monotonically decreasing function of the flipping probability $p$.

Let us denote, $P(z=1|H_1)=\pi_{1,1}$ and $P(z=1|H_0)=\pi_{1,0}$. After plugging in $(P_{0,1}^{B}, P_{1,0}^{B})=(p,p)$ in \eqref{eq1} and \eqref{eq2}, we get
\begin{eqnarray}
\pi_{1,1}&=& t(p-P_d(2p))+P_d\\
\pi_{1,0}&=& t(p-P_{fa}(2p))+P_{fa}.
\end{eqnarray}
Now we show that the KLD, $D$, as given in \eqref{klde} is a monotonically  decreasing function of the parameter $p$ or in other words, $\dfrac{dD}{dp}<0$. After plugging in $\pi_{1,1}'=t(1-2P_d)$ and $\pi_{1,0}'=t(1-2P_{fa})$ in the expression of $\dfrac{dD}{dp}$ and rearranging the terms, the condition $\dfrac{dD}{dp}<0$ becomes 
\begin{eqnarray}
\label{log2}
(1-2P_{fa})\left(\dfrac{1-\pi_{1,1}}{1-\pi_{1,0}}-\dfrac{\pi_{1,1}}{\pi_{1,0}}\right)
+(1-2P_d)\log\left(\dfrac{1-\pi_{1,0}}{1-\pi_{1,1}}\dfrac{\pi_{1,1}}{\pi_{1,0}}\right)<0
\end{eqnarray}
Since $P_d>P_{fa}$ and $t<0.5$, we have $\pi_{1,1}>\pi_{1,0}$. Now, using the fact that $\dfrac{1-P_d}{1-P_{fa}}>\dfrac{1-2P_d}{1-2P_{fa}}$ and \eqref{eq-7}, we have
\begin{small} 
\begin{eqnarray}
&&
\dfrac{1}{\pi_{1,1}-\pi_{1,0}}\left[\dfrac{\pi_{1,1}}{\pi_{1,0}}-\dfrac{1-\pi_{1,1}}{1-\pi_{1,0}}\right]>\dfrac{1-2P_d}{1-2P_{fa}}\left[\dfrac{1}{\pi_{1,0}}+\dfrac{1}{1-\pi_{1,1}}\right]\\
&\Leftrightarrow&
\dfrac{\pi_{1,1}}{\pi_{1,0}}+\dfrac{1-2P_d}{1-2P_{fa}}
\left[1-\dfrac{1-\pi_{1,0}}{1-\pi_{1,1}}\right]
>\dfrac{1-\pi_{1,1}}{1-\pi_{1,0}}
+\dfrac{1-2P_d}{1-2P_{fa}}\left[\dfrac{\pi_{1,1}}{\pi_{1,0}}-1\right].\label{tin}
\end{eqnarray}
\end{small}
Applying the logarithm inequality $(x-1)\geq\log x \geq \dfrac{x-1}{x}$, for $x>0$ to~\eqref{tin}, one can prove that \eqref{log2} is true.
\end{proof} 

Next, to gain insights into the solution, we present some numerical results in Figure~\ref{dvsp} that corroborate our theoretical results.
We plot KLD as a function of the flipping probabilities $(P_{1,0}^B,\;P_{0,1}^B)$. We assume that the probability of detection is $P_d=0.8$, the probability of false alarm is $P_{fa}=0.2$ and the fraction of covered nodes by the Byzantines is $t=0.4$. It can be seen that the optimal attacking strategy comprises of symmetric flipping probabilities and is given by $(P_{0,1}^{B}, P_{1,0}^{B})=(1,1)$, which corroborate our theoretical result presented in Lemma~\ref{lem} and Theorem~\ref{th3}.

\begin{figure}[t]
  \centering
    \includegraphics[height=2.5in, width=!]{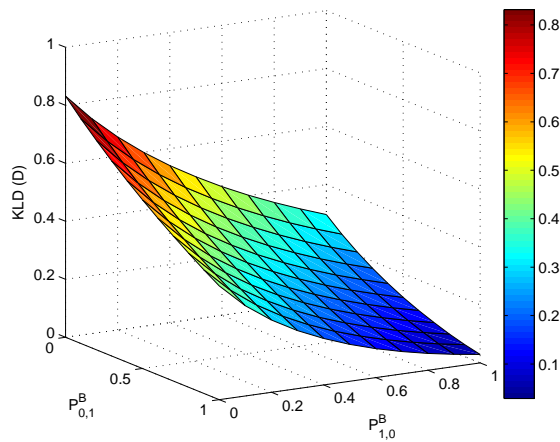}
    \vspace*{-0.1in}
    \caption{KL distance vs Flipping Probabilities when $P_d = 0.8$, 
$P_{fa} = 0.2$, and the fraction of covered nodes by the Byzantines is $t = 0.4$}\label{dvsp}
  \vspace*{-0.15in}
\end{figure}

Next, we explore some properties of the KLD with respect to the fraction of covered nodes $t$ in the region where the attacker cannot blind the FC, i.e., $t<0.5$.
\begin{lemma}
\label{equiv}
$D^*$ =$\underset{(P_{j,1}^{B}, P_{j,0}^{B})}{\text{min}} D(\pi_{j,1}||\pi_{j,0})$ is a continuous, decreasing and convex function of fraction of covered nodes by the Byzantines $t=\sum_{k=1}^{K}[\beta_{k}(\sum_{i=1}^{k}\alpha_{i})]$ in the region where the attacker cannot blind the FC ($t<0.5$).
\end{lemma}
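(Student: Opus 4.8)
The plan is to use the fact, proved in Lemma~\ref{lem} and Theorem~\ref{th3}, that for every fixed $t<0.5$ the minimizing attack is the symmetric pair $(P_{0,1}^{B},P_{1,0}^{B})=(1,1)$. Setting $p=1$ in the expressions for $\pi_{1,1}$ and $\pi_{1,0}$ used in the proof of Theorem~\ref{th3} yields
\begin{equation*}
\pi_{1,1}=P_d(1-2t)+t,\qquad \pi_{1,0}=P_{fa}(1-2t)+t,
\end{equation*}
so that $D^{*}(t)=D(\pi_{1,1}||\pi_{1,0})$ with both arguments \emph{affine} in $t$. For $t\in[0,0.5)$ each argument moves affinely from $P_d$ (resp.\ $P_{fa}$) at $t=0$ toward the common value $1/2$, hence stays inside $(0,1)$; the logarithms in \eqref{klde} are therefore finite and $D^{*}$ is a composition of continuous maps, so continuity is immediate.

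For monotonicity I would differentiate $D^{*}$ by the chain rule, using $d\pi_{1,1}/dt=1-2P_d$ and $d\pi_{1,0}/dt=1-2P_{fa}$. A short computation shows that $dD^{*}/dt$ is \emph{exactly} the left-hand side of inequality~\eqref{log2}. Since $t<0.5$ and $P_d>P_{fa}$ force $\pi_{1,1}>\pi_{1,0}$, the argument already carried out in the proof of Theorem~\ref{th3} shows this quantity is negative, so $D^{*}$ is strictly decreasing; no new estimate is needed beyond recognizing the derivative as a quantity whose sign is already known.

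The convexity is the substantive part and I expect it to be the main obstacle. My preferred route avoids a brute-force second derivative: I would invoke the joint convexity of the Bernoulli relative entropy, namely that $(u,v)\mapsto u\log\frac{u}{v}+(1-u)\log\frac{1-u}{1-v}$ is convex on $(0,1)^{2}$. This holds because each summand $a\log(a/b)$ is the jointly convex perspective of the convex function $-\log$, joint convexity being preserved under the affine substitutions $a\mapsto u,\,1-u$ and $b\mapsto v,\,1-v$ and under addition. Since $t\mapsto(\pi_{1,1}(t),\pi_{1,0}(t))$ is affine, $D^{*}(t)$ is a convex function composed with an affine map and is therefore convex on $[0,0.5)$.

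If a self-contained argument is preferred to citing joint convexity, I would instead differentiate $dD^{*}/dt$ once more and verify $d^{2}D^{*}/dt^{2}\ge 0$ directly. This is elementary but algebraically heavy: one must collect the curvature term $\frac{(1-2P_d)^{2}}{\pi_{1,1}(1-\pi_{1,1})}$ arising from the logarithm together with the derivatives of the two quotients $\frac{1-\pi_{1,1}}{1-\pi_{1,0}}$ and $\frac{\pi_{1,1}}{\pi_{1,0}}$, and it is precisely this bookkeeping where the real care lies. I would therefore make the joint-convexity argument the main proof and treat the explicit Hessian computation as an optional remark.
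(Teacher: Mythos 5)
Your proposal is correct and follows essentially the same route as the paper: continuity from continuity of the KLD, monotonicity by recognizing $dD^{*}/dt$ as the expression in \eqref{log2} whose negativity is established in Theorem~\ref{th3}, and convexity from composing the (jointly) convex divergence with the affine maps $t\mapsto\pi_{1,1}(t),\pi_{1,0}(t)$. The only difference is that you supply the perspective-function justification for joint convexity of the Bernoulli relative entropy, which the paper simply asserts.
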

\begin{proof}
The continuity of $D(\pi_{j,1}||\pi_{j,0})$ with respect to the involved distributions implies the continuity of $D^*$.
To show that $D^*$ is a decreasing function of $t$, we use the fact that $\underset{(P_{0,1}^{B}, P_{1,0}^{B})}{\text{argmin}} D(\pi_{j,1}||\pi_{j,0})$ is equal to $(1,1)$ for $t<0.5$ (as shown in Theorem~\ref{th3}). After plugging $(P_{0,1}^{B}, P_{1,0}^{B})=(1,1)$ in the KLD expression, it can be shown that the expression for the derivative of $D$ with respect to $t$, $\dfrac{dD}{dt}$, is the same as \eqref{log2}. Using the results of Theorem~\ref{th3}, it follows that $\dfrac{dD}{dt}<0$ and, therefore, $D^*$ is a monotonically decreasing function of $t$ in the region where $t<0.5$. The convexity of $D^*$ follows from the fact that $D^*(\pi_{j,1}||\pi_{j,0})$ is convex in $\pi_{j,1}$ and $\pi_{j,0}$, which are affine transformations of $t$ (Note that, convexity holds under affine transformation).
\end{proof}

\textit{It is worth noting that Lemma~\ref{equiv} suggests that by minimizing/maximizing the fraction of covered nodes $t$, the FC can maximize/minimize the KLD. Using this fact, from now onwards we will consider fraction of covered nodes $t$ in lieu of the KLD in further analysis in the paper.}

Next, to gain insights into the solution, we present some numerical results in Figure~\ref{dvst} that corroborate our theoretical results. We plot $\underset{(P_{j,1}^{B}, P_{j,0}^{B})}{\text{min}}$ KLD as a function of the fraction of covered nodes. We assume that the probabilities of detection and false alarm are $P_d=0.8$ and $P_{fa}=0.2$, respectively. Notice that, when $50\%$ of the nodes in the network are covered, KLD between the two probability distributions becomes zero and FC becomes blind. It can be seen that $D^*$ is a continuous, decreasing and convex function of the fraction of covered nodes $t$ in the region $t<0.5$, which corroborate our theoretical result presented in Lemma~\ref{equiv}.

\begin{figure}[t]
  \centering
    \includegraphics[height=2.5in, width=!]{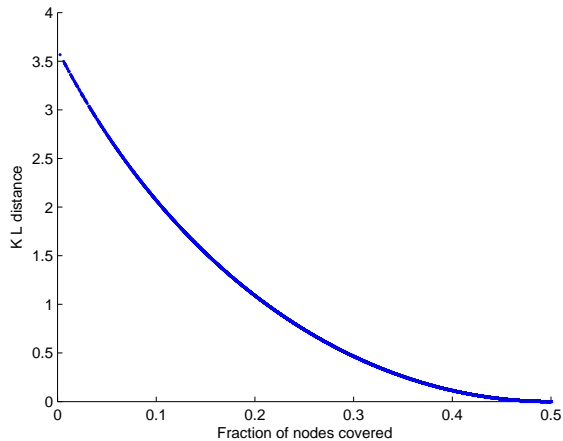}
    \vspace*{-0.1in}
    \caption{$\underset{(P_{j,1}^{B}, P_{j,0}^{B})}{\text{min}}$ KL distance vs Fraction of nodes covered when $P_d = 0.8$ and $P_{fa} = 0.2$}\label{dvst}
  \vspace*{-0.15in}
\end{figure}
Until now, we have explored the problem from the attacker's perspective. In the rest of the paper we look into the problem from a network designer's perspective and propose a technique to mitigate the effect of the Byzantines. More specifically, we explore the problem of designing a robust tree topology considering the Byzantine to incur a cost for attacking the network and the FC to incur a cost for deploying (including the cost of protection, etc.) the network. The FC (network designer) tries to design a perfect $a$-ary tree topology under its cost budget constraint such that the system performance metric, i.e., KLD is maximized. Byzantines, on the other hand, are interested in attacking or capturing nodes to cause maximal possible degradation in system performance, with the cost of attacking or capturing nodes not to exceed the attacker's budget. This problem can be formulated as a bi-level programming problem where the upper and the lower level problems with conflicting objectives belong to the leader (FC) and the follower (Byzantines), respectively.

\section{Robust Topology Design}
\label{sec4}
In this problem setting, it is assumed that there is a cost associated with attacking each node in the tree (which may represent resources required for capturing a node or cloning a node in some cases). We also assume that the costs for attacking nodes at different levels are different. Specifically, let $c_{k}$ be the cost of attacking any one node at level $k$. Also, we assume $c_{k}>c_{k+1}$ for $k= 1,\cdots,K-1$, i.e., it is more costly to attack nodes that are closer to the FC. Observe that, a node $i$ at level $k$ covers (in other words, can alter the decisions of)
all its successors and node $i$ itself. It is assumed that the network designer or the FC has a cost budget $C_{budget}^{network}$ and the attacker has a cost budget $C_{budget}^{attacker}$.
Let $P_{k}$ denote the number of nodes covered by a node at level $k$. We
refer to $P_{k}$ as the ``profit" of a node at level $k$.
Notice that,
$P_{k}=\frac{\sum_{i=k+1}^{K} N_{i}}{N_{k}}+1$.  

Notice that,
in a tree topology, $P_{k}$ can be written as
\begin{equation}
P_{k} = a_{k}\times P_{k+1}+1 \qquad for \;k= 1,...,K-1,
\label{eq17}
\end{equation}
where $P_{k}$ is the profit of attacking a node at level $k$, $P_{k+1}$ is the profit of attacking a node at level $k+1$ and $a_{k}$ is the number of immediate children of a node at level $k$. 
For a perfect $a$-ary tree $a_{k}=a,\;\forall k$ and $P_k=\frac{a^{K-k+1}-1}{a-1}$.
The FC designs the network, such that, given the attacker's budget, the fraction of covered nodes is minimized, and consequently a more robust perfect $a$-ary tree in terms of KLD (See Lemma~\ref{equiv}) is generated. Next, we formulate our robust topology design problem. 

\subsection{Robust Perfect $a$-ary Tree Topology Design}
Since the attacker aims to maximize the fraction of covered nodes by attacking/capturing $\{B_k\}_{k=1}^{K}$ nodes within the cost budget $C_{budget}^{attacker}$, the FC's objective is to minimize  the fraction of covered nodes by choosing the parameters $(K,\;a)$ optimally in a perfect $a$-ary tree topology $T(K,\;a)$ under its cost budget $C_{budget}^{network}$. This situation can be interpreted as a Bi-level optimization problem, where the first decision maker (the so-called leader) has the first choice, and the second
one (the so-called follower) reacts optimally to the leader's selection. It is the leader's aim to find such a decision which, together with the optimal response of the follower, optimizes the objective function of the leader. For our problem, the upper level problem (ULP) corresponds to the FC who is the leader of the game, while the lower level problem (LLP) belongs to the attacker who is the follower. We assume that the FC has complete information about the attacker's problem, i.e., the objective function and the constraints of the LLP. Similarly, the attacker is assumed to be aware about the FC's resources, i.e., cost of deploying the nodes $\{c_k\}_{k=1}^{K}$. Next, we formalize our robust perfect $a$-ary tree topology problem as follows:

\begin{equation}
\begin{split}
\underset{(K,\;a)\in\mathbb{Z}^{+}}{\mathrm{minimize}}\quad& \frac{\sum_{k=1}^{K}(a^{K-k+1}-1)B_{k}}{a(a^K-1)} \\
\mbox{subject to} \quad &  a_{min} \leq a \leq a_{max}\\
\quad &  K \geq K_{min}\\
\quad &  \sum_{k=1}^{K} a^k \geq N_{min}\\
\, & \sum_{k=1}^{K}c_{k}a^k \leq C_{budget}^{network}\\
\, &  \underset{B_{k}\in \mathbb{Z}^{+}}{\text{maximize}} \quad \frac{\sum_{k=1}^{K}(a^{K-k+1}-1)B_{k}}{a(a^K-1)} \\
\, &  \text{subject to}\quad \sum_{k=1}^{K}c_{k}B_{k} \leq C_{budget}^{attacker} \\
\, & \qquad \qquad \quad B_{k} \leq a^{k}, \forall\, k = 1,2,\ldots,K
\end{split}
\end{equation}
where $\mathbb{Z}^{+}$ is the set of non-negative integers, $a_{min}\geq 2$ and $K_{min}\geq 2$. The objective function in ULP is the fraction of covered nodes by the Byzantines $\frac{\sum_{k=1}^{K}P_{k}B_{k}}{\sum_{k=1}^{K}N_k}$, where $P_k=\frac{a^{K-k+1}-1}{a-1}$ and 
$\sum_{k=1}^{K}N_k=\frac{a(a^K-1)}{a-1}$. In the constraint $a_{min} \leq a \leq a_{max}$, $a_{max}$ represents the hardware constraint imposed by the Medium Access Control (MAC) scheme used and $a_{min}$ represents the design constraint enforced by the FC. 
The constraint on the number of nodes in the network $\sum_{k=1}^{K} a^k \geq N_{min}$ ensures that the network satisfies pre-specified detection performance guarantees. In other words, $N_{min}$ is the minimum number of nodes needed to guarantee a certain detection performance. The constraint on the cost expenditure $\sum_{k=1}^{K}c_{k}a^k \leq C_{budget}^{network}$ ensures that the
total expenditure of the network designer does not exceed the available budget.

In the LLP, the objective function is the same as that of the FC, but the sense of optimization is opposite, i.e., maximization of the fraction of covered nodes. 
The constraint $\sum_{k=1}^{K}c_{k}B_{k} \leq C_{budget}^{attacker}$ ensures that the total expenditure of the attacker does not exceed the available budget. The constraints $B_{k} \leq a^{k},\; \forall k$ are logical conditions, which prevent the attacker
from attacking non-existing resources.

Notice that, the bi-level optimization problem, in general, is an NP-hard problem~\cite{np}. In fact, the optimization problem corresponding to LLP is the packing formulation of the Bounded Knapsack Problem (BKP) \cite{Deineko}, 
which itself, in general, is NP-hard. Next, we discuss some properties of our objective function that enable our robust topology design problem to have a polynomial time solution.

\begin{lemma}
\label{lemk}
In a perfect $a$-ary tree topology, the fraction of covered nodes $\frac{\sum_{k=1}^{K}P_{k}B_{k}}{\sum_{k=1}^{K}N_k}$ by the attacker with the cost budget $C_{budget}^{attacker}$ for an optimal attack is a non-decreasing function of the number of levels $K$ in the tree.   
\end{lemma}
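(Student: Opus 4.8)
The plan is to prove the stronger pointwise statement $f(K+1)\ge f(K)$, where $f(K)$ denotes the optimal value of the lower-level (attacker's) problem for a $K$-level perfect $a$-ary tree under a fixed budget $C_{budget}^{attacker}$ and fixed per-level costs $\{c_k\}$. The engine of the argument is a feasibility (lifting) comparison: I take an optimal attack on the $K$-level tree and reinterpret it as a feasible attack on the $(K+1)$-level tree at the same depths, then show this particular feasible attack already covers a fraction at least $f(K)$. Since $f(K+1)$ is the maximum over all feasible $(K+1)$-level attacks, the claim follows. To set this up, I would first make the $K$-dependence explicit by writing $P_k(K)=\frac{a^{K-k+1}-1}{a-1}$ for the profit of a level-$k$ node and $T_K=\sum_{i=1}^{K}a^i=\frac{a(a^K-1)}{a-1}$ for the total node count, and record the two identities
\begin{equation}
P_k(K+1)=a\,P_k(K)+1,\qquad T_{K+1}=a\,(T_K+1),
\end{equation}
which hold for $k=1,\dots,K$ by a direct computation. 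I would also note the normalization $T_K=a\,P_1(K)$ (the total node count is exactly $a$ times the largest profit) and that $P_k(K)\le P_1(K)$ for every $k$, since level-$1$ nodes subtend the largest subtrees.

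Next, let $\{B_k^*\}_{k=1}^{K}$ be an optimal attack on the $K$-level tree, covering $S=\sum_{k=1}^{K}P_k(K)B_k^*$ nodes using $M=\sum_{k=1}^{K}B_k^*$ attacked nodes, so that $f(K)=S/T_K$. I lift it to the $(K+1)$-level tree by attacking the same numbers $B_k^*$ at depths $k=1,\dots,K$ and setting $B_{K+1}=0$. This lifted attack is feasible: the cost constraint $\sum_k c_k B_k^*\le C_{budget}^{attacker}$ and the capacity constraints $B_k^*\le a^k$ are identical in both trees, and the newly added bottom level is simply left unattacked (so $c_{K+1}$ is irrelevant and $B_{K+1}=0\le a^{K+1}$ trivially). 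Using the recursions above, the fraction covered by the lifted attack is
\begin{equation}
\frac{\sum_{k=1}^{K}P_k(K+1)B_k^*}{T_{K+1}}=\frac{aS+M}{a\,(T_K+1)}=\frac{S+M/a}{T_K+1}.
\end{equation}

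It then remains to verify $\frac{S+M/a}{T_K+1}\ge\frac{S}{T_K}$, which after cross-multiplication is equivalent to $M\,T_K\ge a\,S$. This is where the normalization does the work: using $P_k(K)\le P_1(K)$ and $T_K=a\,P_1(K)$,
\begin{equation}
aS=a\sum_{k=1}^{K}P_k(K)B_k^*\le a\,P_1(K)\sum_{k=1}^{K}B_k^*=T_K\,M,
\end{equation}
as required. Hence the lifted attack covers a fraction at least $f(K)$, and since $f(K+1)$ maximizes over all feasible $(K+1)$-level attacks (including this one), we conclude $f(K+1)\ge f(K)$, i.e.\ the optimal covered fraction is non-decreasing in $K$.

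The individual steps are short; the only genuine obstacle is choosing the right comparison. The naive approach of fixing the attacked \emph{counts} and reasoning about profits directly is awkward because the optimal $\{B_k^*\}$ generally changes with $K$, so one cannot compare the two knapsack solutions termwise. The clean route is the depth-preserving lift combined with the observation $T_K=a\,P_1(K)$, which collapses the monotonicity into the transparent bound $aS\le T_K M$. I would also dispatch the degenerate case of an empty optimal attack ($S=M=0$, e.g.\ when the budget is too small), where both sides vanish and the inequality holds trivially.
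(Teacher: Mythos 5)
Your proof is correct and follows essentially the same route as the paper's: both lift the optimal $K$-level attack $\{B_k^*\}$ into the $(K+1)$-level tree, invoke suboptimality of the lifted attack to get the first inequality, and then verify algebraically that the lifted attack's covered fraction does not decrease (the paper reduces this to $\sum_k B_k^*/a^k \le \sum_k B_k^*/a$ via a mediant-type inequality, while you reduce it to $aS \le T_K M$ via $P_k(K)\le P_1(K)$ and $T_K = aP_1(K)$ --- the same condition in different packaging). No substantive difference.
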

\begin{proof}
Let us denote the optimal attacking set for a $K$ level \textit{perfect $a$-ary} tree topology $T(K,\; a)$ by $\{B_k^1\}_{k=1}^{K}$ and the optimal attacking set for a \textit{perfect $a$-ary} tree topology with $K+1$ levels by $\{B_k^2\}_{k=1}^{K+1}$ given the cost budget $C_{budget}^{attacker}$. To prove the lemma, it is sufficient to show that
\begin{center}
\begin{equation}
\label{incK}
\frac{\sum_{k=1}^{K+1}P_{k}^{2}B_k^2}{\sum_{k=1}^{K+1}N_k}\geq\frac{\sum_{k=1}^{K}P_{k}^{2}B_{k}^1}{\sum_{k=1}^{K+1}N_k}\geq\frac{\sum_{k=1}^{K}P_{k}^{1}B_{k}^1}{\sum_{k=1}^{K}N_k},
\end{equation}
\end{center}
where $P_{k}^1$ is the profit of attacking a node at level $k$ in a $K$ level \textit{perfect $a$-ary} tree topology and $P_{k}^2$ is the profit of attacking a node at level $k$ in a $K+1$ level \textit{perfect $a$-ary} tree topology.

First inequality in \eqref{incK} follows due to the fact that $\{B_k^1\}_{k=1}^{K}$ may not be the optimal attacking set for topology $T(K+1,a)$. To prove the second inequality observe that, an increase in the value of parameter $K$ results in an increase in both the denominator (number of nodes in the network) and the numerator (fraction of covered nodes). Using this fact, let us denote
\begin{equation}
\dfrac{\sum_{k=1}^{K}P_{k}^{2}B_{k}^1}{\sum_{k=1}^{K+1}N_k}=\frac{x+x_1}{y+y_1}
\end{equation}
with $x={\sum_{k=1}^{K}P_{k}^{1}B_{k}^1}$ with $P_k^1=\dfrac{a^{K-k+1}-1}{a-1}$, $y={\sum_{k=1}^{K}N_k}=\dfrac{a(a^K-1)}{a-1}$, $x_1={\sum_{k=1}^{K}(B_{k}^{1} a^{K-k+1})}$ is the increase in the profit by adding one more level to the topology and $y_1={a^{K+1}}$ is the increase in the number of nodes in the network by adding one more level to the topology .

Note that $\dfrac{x+x_1}{y+y_1}>\dfrac{x}{y}$ 
if and only if \begin{equation}
\label{con}
\dfrac{x}{y}<\dfrac{x_1}{y_1},
\end{equation}
where $x,y,x_1,$ and $y_1$ are positive values. Hence, it is sufficient to prove that
\begin{center}
\begin{equation*}
\frac{a^{K+1}\sum_{k=1}^{K}\left(\frac{B_{k}^{1}}{a^{k}}\right)-\sum_{k=1}^{K}B_{k}^{1}}{a(a^K-1)}\leq \frac{\sum_{k=1}^{K}(B_{k}^{1} a^{K-k+1})}{a^{K+1}}.
\end{equation*}
\end{center} 
The above equation can be further simplified to
\begin{center}
\begin{equation*}
\sum_{k=1}^{K}\left(\frac{B_{k}^{1}}{a^{k}}\right)\leq \sum_{k=1}^{K}\left(\frac{B_{k}^{1}}{a}\right)
\end{equation*}
\end{center}
which is true for all $K\ge 1$.
\end{proof}

Next, to gain insights into the solution, we present some numerical results in Figure~\ref{tvsk} that corroborate our theoretical results. We plot the fraction of covered nodes by the Byzantines as a function of the total number of levels in the tree. We assume that $a=2$ and vary $K$ from $2$ to $9$. We also assume that the cost to attack nodes at different levels are given by $[c_1,\cdots,c_9]=[52,\;48,\;24,\;16,\;12,\;8,\;10,\;6,\;4]$ and the cost budget of the attacker is $C_{budget}^{attacker}=50$.
For each $T(K,\;2)$, we find the optimal attacking set $\{B_k\}_{k=1}^{K}$ by an exhaustive search.
It can be seen that the fraction of covered nodes is a non-decreasing function of the number of levels $K$, which corroborate our theoretical result presented in Lemma~\ref{lemk}.

\begin{figure}[t]
  \centering
    \includegraphics[height=2.5in, width=!]{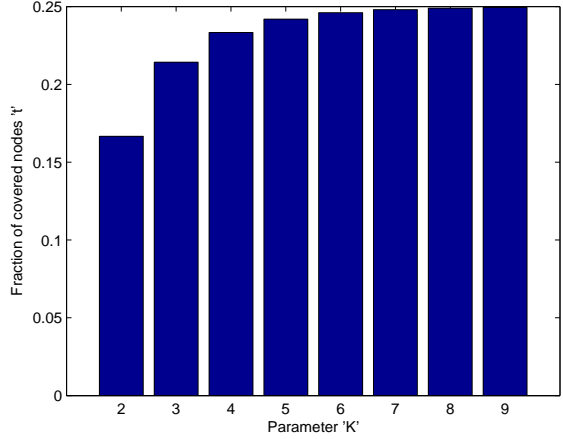}
    \vspace*{-0.1in}
    \caption{Fraction of nodes covered vs Parameter $K$ when $a = 2$, $K$ is varied
from 2 to 9, $[c_1,\cdots,c_9]=[52,\;48,\;24,\;16,\;12,\;8,\;10,\;6,\;4]$, and  $C_{budget}^{attacker}=50$}\label{tvsk}
  \vspace*{-0.15in}
\end{figure}

Next, we explore some properties of the fraction of covered nodes with parameter $a$ for a \textit{perfect $a$-ary} tree topology. Before discussing our result, we define the parameter $a_{min}$ as follows. For a fixed $K$ and attacker's cost budget $C_{budget}^{attacker}$, $a_{min}$ is defined as the minimum value of $a$ for which the attacker cannot blind the network or cover $50\%$ or more nodes. So we can restrict our analysis to $a_{min} \leq a\leq a_{max}$. Notice that, the attacker cannot blind all the trees $T(K,a)$ for which $a\geq a_{min}$ and can blind all the trees $T(K,a)$ for which $a<a_{min}$.\\
\begin{lemma}
\label{lema} 
In a \textit{perfect $a$-ary} tree topology, the fraction of covered nodes $\frac{\sum_{k=1}^{K}P_{k}B_{k}}{\sum_{k=1}^{K}N_k}$ by an attacker with cost budget $C_{budget}^{attacker}$ in an optimal attack is a decreasing function of parameter $a$ for a \textit{perfect $a$-ary} tree topology for $a\geq a_{min}\ge 2$.   
\end{lemma}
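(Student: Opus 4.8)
The plan is to mirror the comparison argument used for Lemma~\ref{lemk}, but now between the trees $T(K,a)$ and $T(K,a+1)$ for an integer $a\ge a_{min}$, and to reduce everything to a single per-node monotonicity fact. First I would rewrite the objective. Writing $f_k(a)=\frac{P_k}{\sum_i N_i}=\frac{a^{K-k+1}-1}{a(a^K-1)}$ for the fraction of the whole tree covered by one attacked node at level $k$, the fraction covered by an attack $\{B_k\}$ is $t(a)=\sum_{k=1}^{K}f_k(a)B_k$, and the attacker's optimal value is $g(a)=\max_{\{B_k\}}\sum_{k=1}^{K}f_k(a)B_k$ over the feasible set $\sum_k c_k B_k\le C_{budget}^{attacker}$, $0\le B_k\le a^k$.

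The technical core is the claim that for every fixed level $k$ the per-node coverage $f_k(a)$ is strictly decreasing in $a$. This is a single-variable inequality: since $K-k+1\le K$, the denominator $a(a^K-1)$ has strictly higher degree than the numerator $a^{K-k+1}-1$, so writing $1/f_k(a)=(\sum_{i=1}^{K}a^i)/(\sum_{i=0}^{K-k}a^i)$ and showing this ratio increases in $a$ (equivalently, that $f_k(a)>f_k(a+1)$ after cross-multiplication, or that the logarithmic derivative $f_k'/f_k<0$) settles it. I expect this to be routine algebra and I would not belabor it.

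The conceptual heart, and the step I expect to be the main obstacle, is that the feasible set \emph{grows} with $a$: the capacity constraints $B_k\le a^k$ relax as $a$ increases while the budget constraint is unchanged, so a priori a larger tree could admit strictly more coverage, working against monotonicity. The key observation that removes this obstacle is that, in the regime $a\ge a_{min}$, the budget is too small to fill any single complete level. Indeed, attacking all $a^k$ nodes of one level $k$ covers the fraction $\frac{\sum_{i=k}^{K}a^i}{\sum_{i=1}^{K}a^i}$, whose minimum over $k$ is attained at $k=K$ and equals $\frac{a^{K-1}(a-1)}{a^K-1}>\frac{a-1}{a}\ge\frac12$ for $a\ge2$; hence filling any one complete level already blinds the FC. Since $a\ge a_{min}$ means the FC cannot be blinded under budget $C_{budget}^{attacker}$, we must have $c_k a^k>C_{budget}^{attacker}$ for every $k$.

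Finally I would close the argument by mapping the larger tree's optimum onto the smaller tree. Let $\{B_k^{*}\}$ be an optimal attack for $T(K,a+1)$. Budget feasibility gives $c_k B_k^{*}\le\sum_i c_i B_i^{*}\le C_{budget}^{attacker}<c_k a^k$, so $B_k^{*}<a^k$ for every $k$; that is, $\{B_k^{*}\}$ is automatically feasible for the smaller tree $T(K,a)$. Therefore $g(a)\ge\sum_k f_k(a)B_k^{*}\ge\sum_k f_k(a+1)B_k^{*}=g(a+1)$, where the middle inequality applies the per-node monotonicity term by term. This yields $g(a)\ge g(a+1)$, with strict inequality whenever the optimal attack is nonempty, proving that the optimal fraction of covered nodes is decreasing in $a$ on $a\ge a_{min}$. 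The only real subtlety is the no-overflow claim, which is precisely where the hypothesis $a\ge a_{min}$ enters: without it a larger capacity could be exploited and monotonicity could fail.
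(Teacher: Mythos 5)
Your overall strategy is the same as the paper's: show the per-node coverage fraction $P_k/\sum_k N_k$ decreases in $a$ for each level $k$, transfer the optimal attack $\{B_k^{*}\}$ of $T(K,a+1)$ onto $T(K,a)$, apply the per-node inequality termwise, and then use suboptimality of the transferred set for $T(K,a)$. The paper likewise identifies the feasibility of the transferred set as the crux and likewise resolves it by a ``this would blind the FC, contradicting $a\ge a_{min}$'' argument. Your shortcut for the capacity constraint --- filling any complete level covers at least half the tree, so $a\ge a_{min}$ forces $c_k a^k>C_{budget}^{attacker}$ and hence $B_k^{*}<a^k$ automatically --- is a clean compression of the paper's corresponding case. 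One caveat: the per-node monotonicity $f_k(a)>f_k(a+1)$, which you dismiss as routine, is exactly where the paper spends its Appendix B (inequalities \eqref{ineq1} and \eqref{ineq2}); your reformulation via $1/f_k(a)$ as a ratio of geometric sums is plausible but you have not actually verified it, so as written that step is asserted rather than proved.

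The genuine gap is that you rule out only one of the two ways the transferred set $\{B_k^{*}\}$ can fail to be a valid attack on $T(K,a)$. The system model requires at most one Byzantine on any root-to-leaf path ($\sum_{i=1}^{k}\alpha_i\le 1$), and the objective $\sum_k P_k B_k/\sum_k N_k$ equals the fraction of covered nodes only under this non-overlap condition. Your bound $B_k^{*}<a^k$ does not preclude overlap: for instance $B_1^{*}=2$, $B_2^{*}=4$ in $T(2,3)$ satisfies $B_k^{*}<3^k$ yet cannot be arranged without two Byzantines sharing a path, since the two level-$1$ nodes already cover $6$ of the $9$ level-$2$ nodes. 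The paper explicitly treats this as its second invalidity case and disposes of it in Appendix~\ref{ap2} by showing that any overlapping set satisfies $\sum_k P_k B_k/\sum_k N_k\ge P_K N_K/\sum_k N_k\ge 1/2$, again contradicting $a\ge a_{min}$. The fix is the same blinding argument you already use for the capacity case, so the gap is reparable, but as written your feasibility step is incomplete.
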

\begin{proof}
As before, let us denote the optimal attacking set for a $K$ level \textit{perfect $a$-ary} tree topology $T(K,\; a)$ by $\{B_k^1\}_{k=1}^{K}$ and the optimal attacking set for a \textit{perfect (a+1)-ary} tree topology $T(K,a+1)$  by $\{B_k^2\}_{k=1}^{K}$ given the cost budget $C_{budget}^{attacker}$. To prove the lemma, it is sufficient to show that 
\begin{center}
\begin{equation}
\label{inca}
\frac{\sum_{k=1}^{K}P_{k}^{2}B_k^2}{\sum_{k=1}^{K}N_k^2}<\frac{\sum_{k=1}^{K}P_{k}^{1}B_{k}^2}{\sum_{k=1}^{K}N_k^1}\leq\frac{\sum_{k=1}^{K}P_{k}^{1}B_{k}^1}{\sum_{k=1}^{K}N_k^1},
\end{equation}
\end{center}
where $N_k^1$ is the number of nodes at level $k$ in $T(K,a)$, $N_k^2$ is the number of nodes at level $k$ in $T(K,a+1)$, $P_{k}^1$ is the profit of attacking a node at level $k$ in $T(K,a)$ and $P_{k}^2$ is the profit of attacking a node at level $k$ in $T(K,a+1)$. Observe that, an interpretation of \eqref{inca} is that the attacker is using the attacking set $\{B_k^2\}_{k=1}^{K}$ to attack $T(K,\; a)$. However, one might suspect that  
the set $\{B_k^2\}_{k=1}^{k=K}$ is not a valid solution. More specifically, the set $\{B_k^2\}_{k=1}^{k=K}$ is not a valid solution in the following two cases:\\
1. \textit{$min(B_k^2,\;N_k^1)=N_k^1$ for any $k$:} For example, if $N_1^1=4$ for $T(K,4)$ and $B_1^2=5$ for $T(K,5)$ then it will not be possible for the attacker to attack $5$ nodes at level $1$ in $T(K,4)$ because the total number of nodes at level $1$ is $4$. In this case, $\{B_k^2\}_{k=1}^{K}$  might not be a valid attacking set for the tree $T(K,a)$.\\
2. \textit{$\{B_k^2\}_{k=1}^{k=K}$ is an overlapping set\footnote{We call $B_{k}$ and $B_{k+x}$ are overlapping, if the summation of $B_k^{k+x}$ and $B_{k+x}$ is greater than $N_{k+x}$, where $B_k^{k+x}$ is the number of nodes covered by the attacking set $B_{k}$ at level $k+x$. In a non-overlapping case, the attacker can always arrange nodes $\{B_{k}\}_{k=1}^{K}$ such that each path in the network has at most one Byzantine.} for $T(K,a)$:} For example, for $T(2,3)$ if $B_1^2=2$ and $B_2^2=4$, then, $B_1^2$ and $B_2^2$ are overlapping. In this case, $\{B_k^2\}_{k=1}^{K}$  might not be a valid attacking set for the tree $T(K,a)$.\\
However, both of the above conditions imply that the attacker can blind the network with $C_{budget}^{attacker}$ (See Appendix~\ref{ap2}), which cannot be true for $a\geq a_{min}$, and, therefore, $\{B_k^2\}_{k=1}^{K}$ will indeed be a valid solution. Therefore, \eqref{inca} is sufficient to prove the lemma.

Notice that, the second inequality in \eqref{inca} follows due to the fact that $\{B_k^2\}_{k=1}^{K}$ may not be the optimal attacking set for topology $T(K,a)$. 
To prove the first inequality in \eqref{inca}, we first 
consider the case where attacking set $\{B_k^2\}_{k=1}^{k=K}$ contains only one node, i.e., $B_k^2=1$ for some $k$, and show that $\frac{P_{k}^{2}}{\sum_{k=1}^{K}N_k^2}<\frac{P_{k}^{1}}{\sum_{k=1}^{K}N_k^1}$.
Substituting $P_k^1=\dfrac{a^{K-k+1}-1}{a-1}$ for some $k$ and  ${\sum_{k=1}^{K}N_k^1}=\dfrac{a(a^K-1)}{a-1}$ in the left side inequality of \eqref{inca}, we have 
\begin{equation*}
\frac{(a)^{K-k+1}-1}{(a)((a)^K-1)}> \frac{(a+1)^{K-k+1}-1}{(a+1)((a+1)^K-1)}.
\end{equation*}
After some simplification, the above condition becomes
\begin{eqnarray}
&&(a+1)^{K+1}[(a)^{K-k+1}-1]-(a)^{K+1}[(a+1)^{K-k+1}-1]\nonumber\\
&&+(a)[(a+1)^{K-k+1}-1]-(a+1)[(a)^{K-k+1}-1]> 0.\label{eq_a}
\end{eqnarray}
In  Appendix~\ref{ap1}, we show that
\begin{equation}
\label{ineq2}
(a)[(a+1)^{K-k+1}-1]-(a+1)[(a)^{K-k+1}-1]> 0
\end{equation}
and
\begin{equation}
\label{ineq1}
(a+1)^{K+1}[(a)^{K-k+1}-1]-(a)^{K+1}[(a+1)^{K-k+1}-1]\geq 0.
\end{equation}
From \eqref{ineq1} and \eqref{ineq2}, condition~\eqref{eq_a} holds.

Since we have proved that
$$\frac{P_{k}^{2}}{\sum_{k=1}^{K}N_k^2}<\frac{P_{k}^{1}}{\sum_{k=1}^{K}N_k^1}\mbox{ for all } 1\le k\le K,$$
to generalize the proof for any arbitrary attacking set $\{B_k^2\}_{k=1}^{K}$ we multiply both sides of the above inequality with $B^2_k$ and sum it over all $1\leq k\leq K$ inequalities. Now, we have 
$$\frac{\sum_{k=1}^{K}P_{k}^{2}B_k^2}{\sum_{k=1}^{K}N_k^2}<\frac{\sum_{k=1}^{K}P_{k}^{1}B_{k}^2}{\sum_{k=1}^{K}N_k^1}.$$

\end{proof}

\begin{figure}[t]
  \centering
    \includegraphics[height=2.5in, width=!]{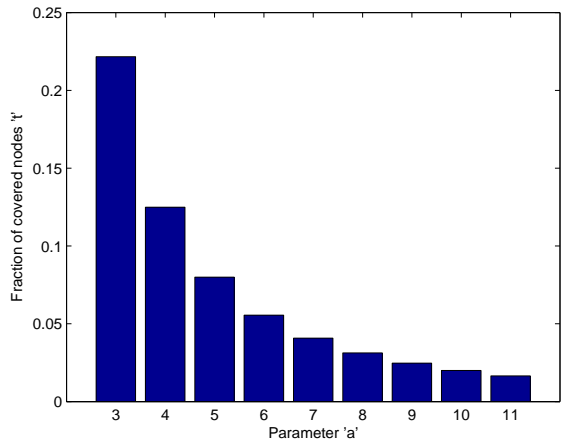}
    \vspace*{-0.1in}
    \caption{Fraction of nodes covered vs Parameter $a$ when $K=6$, parameter $a$ is varied
    from 3 to 11, $[c_1,\cdots,c_9]=[52,\;48,\;24,\;16,\;12,\;8,\;10,\;6,\;4]$, and  $C_{budget}^{attacker}=50$}\label{tvsa}
  \vspace*{-0.15in}
\end{figure}

Next, to gain insights into the solution, we present some numerical results in Figure~\ref{tvsa} that corroborate our theoretical results.
We plot the fraction of covered nodes by the Byzantines as a function of the parameter $a$ in the tree. We assume that the parameter $K=6$ and vary $a$ from $3$ to $11$. We also assume that the cost to attack nodes at different levels are given by $[c_1,\cdots,c_9]=[52,\;48,\;24,\;16,\;12,\;8,\;10,\;6,\;4]$ and the cost budget of the attacker is $C_{budget}^{attacker}=50$.
For each $T(6,\;a)$ we find the optimal attacking set $\{B_k\}_{k=1}^{K}$ by an exhaustive search.
It can be seen that the fraction of covered nodes is a decreasing function of the parameter $a$, which corroborate our theoretical result presented in Lemma~\ref{lema}.

Next, based on the above Lemmas we present an algorithm which can solve our robust perfect $a$-ary tree topology design problem (bi-level programming problem) efficiently.\\

\subsection{Algorithm for solving Robust Perfect $a$-ary Tree Topology Design Problem}

\begin{algorithm} [] 
\small                       
\caption{Robust Perfect $a$-ary Tree Topology Design}          
\label{robust}                           
\begin{algorithmic} [1]
\REQUIRE $c_{k} > c_{k+1} \qquad for \;k= 1,...,K-1$
\STATE $K\leftarrow K_{min}$; $a\leftarrow a_{max}$

\IF{$\left(\sum_{k=1}^{K} c_k a^{k}> C_{budget}^{network}\right)$}\label{if-loop}
\STATE Find the largest integer $a-\ell$, $\ell\ge 0$, such that $\sum_{k=1}^{K} c_k (a-\ell)^{k}\le C_{budget}^{network}$\label{find}
\IF{$\left(a-\ell<a_{min}\right)$}
\STATE \textbf{return} $(\phi,\phi)$
\ELSE
\STATE 
$a\leftarrow a-\ell$
\ENDIF
\ENDIF
\IF{$\left(\sum_{k=1}^{K} a^{k}\geq N_{min}\right)$}
\STATE \textbf{return} \text{$(K, a)$}
\ELSE
\STATE $K\leftarrow K+1$
\STATE \textbf{return to} \text{Step}~\ref{if-loop}
\ENDIF
\end{algorithmic}
\end{algorithm}
Based on Lemma \ref{lemk} and Lemma \ref{lema}, we  present a polynomial time algorithm for solving the robust perfect $a$-ary tree topology design problem.
Observe that, the robust network design problem is equivalent to designing perfect $a$-ary tree topology with minimum $K$ and maximum $a$ that satisfy network designer's constraints. In Algorithm~\ref{robust}, we start from the solution candidate $(a_{max},\; K_{min})$. If it does not satisfy the cost expenditure constraint we reduce $a_{max}$ by one, i.e., $a_{max}\leftarrow a_{max}-1$. Next, the algorithm checks for the total number of nodes constraint and if it is not satisfied, we increase $K_{min}$ by one, i.e., $K_{min}\leftarrow K_{min}+1$. After these steps, the algorithm checks whether this new solution candidate satisfies both the constraints. If it does, this will be the solution for the problem, otherwise, the algorithm solves the problem recursively until the hardware constraint is violated, 
i.e., $a<a_{min}$. In this case ($a<a_{min}$), we will not have any feasible solution which satisfies the network designer's constraints.\\  

This procedure greatly reduces the complexity because we do not need to solve the lower level problem in this case. Next, we prove that Algorithm~\ref{robust} indeed yields an optimal solution.
\begin{lemma}
Robust Perfect $a$-ary Tree Topology Design algorithm (Algorithm~\ref{robust}) yields an optimal solution $(K^*,\; a^*)$, if one exists.
\end{lemma}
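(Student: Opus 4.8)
The plan is to combine the two monotonicity results already proved (Lemma~\ref{lemk} and Lemma~\ref{lema}) with a simple characterization of the feasible region, and to show that the lexicographic choice the algorithm makes --- smallest level count $K$, and, for that $K$, the largest branching factor $a$ --- is exactly the global minimizer of the upper-level objective. Write $f(K,a)$ for the optimal value of the lower-level (attacker's) problem, i.e.\ the fraction of covered nodes under the optimal attack on $T(K,a)$ with budget $C_{budget}^{attacker}$. By Lemma~\ref{lemk}, $f(K,a)$ is non-decreasing in $K$ for each fixed $a$, and by Lemma~\ref{lema} it is strictly decreasing in $a$ for each fixed $K$ on the range $a\ge a_{min}$. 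Thus $f$ is pushed down by taking $K$ small and $a$ large, which is precisely what the algorithm attempts.

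First I would record a compatibility property of the constraints. For fixed $K$ both $\sum_{k=1}^{K}c_k a^k$ and $\sum_{k=1}^{K}a^k$ are strictly increasing in $a$, and for fixed $a$ the cost $\sum_{k=1}^{K}c_k a^k$ is strictly increasing in $K$. Let $\bar a(K)$ denote the largest $a\le a_{max}$ meeting the network cost budget at level count $K$ (this is what Step~\ref{find} computes). The key structural fact is that $\bar a(K)$ is non-increasing in $K$: adding a level can only raise the cost, so it can never enlarge the set of cost-feasible branching factors. This is what ultimately prevents a ``larger $K$ for larger $a$'' trade-off.

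Next I would characterize the algorithm's output. The algorithm increments $K$ from $K_{min}$, at each step sets $a=\bar a(K)$, declares infeasibility as soon as $\bar a(K)<a_{min}$, and returns $(K,a)$ as soon as $\sum_{k=1}^{K}a^k\ge N_{min}$ holds at $a=\bar a(K)$. Let $(K^*,a^*)$ with $a^*=\bar a(K^*)$ be the returned pair. I would argue that $K^*$ is the smallest feasible level count: for every $K_{min}\le K<K^*$ the node constraint fails at $a=\bar a(K)$, and since $\sum_{k=1}^{K}a^k$ is increasing in $a$ while every feasible $a$ is $\le\bar a(K)$, the node constraint fails for all admissible $a$, so no feasible solution exists at that $K$. (The infeasibility return is justified by the same monotonicity: once $\bar a(K)<a_{min}$ it stays below $a_{min}$ for all larger $K$, so no later $K$ can recover feasibility.) Moreover $a^*$ is the largest feasible $a$ at $K^*$, because any feasible $a$ there is cost-feasible and hence $\le\bar a(K^*)=a^*$.

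Finally I would prove global optimality by a short case split over an arbitrary feasible $(K',a')$. Since $K^*$ is minimal, $K'\ge K^*$. If $K'=K^*$, then $a'\le a^*$, so $f(K^*,a')\ge f(K^*,a^*)$ by Lemma~\ref{lema}. If $K'>K^*$, then $a'\le\bar a(K')\le\bar a(K^*)=a^*$ by the non-increasing property of $\bar a$, whence $f(K',a')\ge f(K^*,a')\ge f(K^*,a^*)$, the first inequality from Lemma~\ref{lemk} at fixed $a'$ and the second from Lemma~\ref{lema}. In every case $f(K^*,a^*)\le f(K',a')$, so $(K^*,a^*)$ is optimal. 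I expect the main obstacle to lie in this last step: it works only because the two lemmas pull in directions that the feasible region respects, and it is exactly the fact that $\bar a(K)$ is non-increasing in $K$ that guarantees a competitor with a larger $K$ cannot also enjoy a larger (and therefore more favorable) $a$. Pinning down that structural monotonicity of $\bar a(K)$, together with the ``failure at $\bar a(K)$ implies infeasibility at $K$'' argument, is the crux; once it is in place the optimality conclusion follows immediately.
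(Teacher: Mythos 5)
Your proof is correct and follows essentially the same route as the paper: both arguments rest on the two monotonicity lemmas together with the observation that the cost constraint forces the largest admissible $a$ to be non-increasing in $K$, so every feasible competitor $(K',a')$ satisfies $K'\geq K^*$ and $a'\leq a^*$ and hence cannot outperform the returned pair. Your writeup is in fact somewhat more explicit than the paper's (e.g., in justifying why no feasible solution exists for $K<K^*$ and why the infeasibility return is sound), but the decomposition and key ideas are identical.
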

\begin{proof}
Assume that the optimal solution exists. Let us denote by $(K^*,\; a^*)$, the optimal solution given by Algorithm~\ref{robust}. The main idea behind our proof is that any solution $(K,a)$ with $K\geq K^*$ and $a\leq a^*$ cannot perform better than $(K^*,\; a^*)$ as suggested by Lemma \ref{lemk} and Lemma \ref{lema}. By transitive property, it can be proved that any solution $(K,a)$ with $K\geq K^*$ and  $a\leq a^*$ cannot perform better than $(K^*,\; a^*)$.
Also, observe that, the only feasible solution in the region ($K_{min}\leq K\leq K^*,\;a^*\leq a\leq a_{max}$) is $(K^*,\; a^*)$. This implies that $(K^*,\; a^*)$ is an optimal solution.

Notice that, our algorithm searches for the feasible solution with the smallest $K$ and the largest $a$. Any feasible solution $(K,\ a)$ satisfies the following two conditions:
\begin{enumerate}
\item $\sum_{k=1}^{K} c_k a^{k}\le C_{budget}^{network}$;\label{condition-1}
\item $\sum_{k=1}^{K} a^{k}\geq N_{min}$.\label{condition-2}
\end{enumerate}

By Lemma~\ref{lema}, if $(K,a)$ is a feasible solution, then $(K,a')$ with $a'<a$ will not be a better solution than $(K,a)$. Hence, for a given $K$, Step~\ref{find} only locates the  solution with largest $a$ for a given $K$. Furthermore, if both $(K,a)$ and $(K',a')$ satisfy  Condition~\ref{condition-1} and $K<K'$, then $a\ge a'$. Hence, for a given $K$, the largest $a$ in the current iteration satisfying Condition~\ref{condition-1} cannot be larger than the $a$ found in the previous iteration. This verifies that $\ell\ge 0$ is a sufficient condition to find the largest $a$ in Step~\ref{find}.

Next, we prove that Algorithm~\ref{robust} can stop when the first feasible solution has been found. Let $(K^1,\; a^1)$ be the first feasible solution found by Algorithm~\ref{robust}. It is clear that the next feasible solution $(K,a)$ must have $K>K^1$ and $a\le a^1$, since, the algorithm increases $K$ and it satisfies Condition~\ref{condition-1}. Algorithm~\ref{robust} stops 
when both Condition~\ref{condition-1} and Condition~\ref{condition-2} satisfy.

By the previous argument given in the beginning of the proof, we conclude that $(K,a)$ does not perform better than $(K^1,\; a^1)$. Hence, $(K^1,\; a^1)$ is the optimal solution $(K^*,\; a^*)$. It can be seen that if there is no solution, then the algorithm will return $(\emptyset,\emptyset)$. This is due to the fact that if $a-\ell<a_{min}$, then no $a$ can satisfy Condition~\ref{condition-1} for current and further iterations. Hence, the algorithm terminates and returns $(\emptyset,\emptyset)$.
\end{proof}

\begin{figure}[t]
  \centering
    \includegraphics[height=2.5in, width=!]{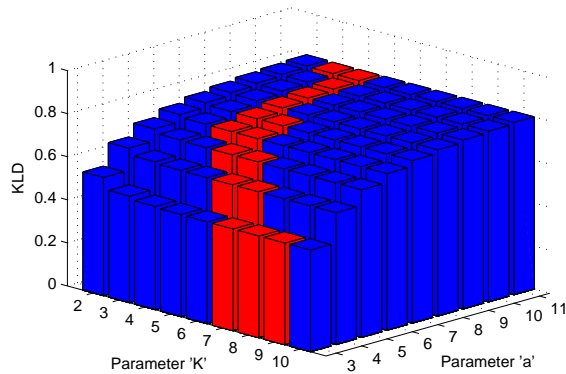}
    \vspace*{-0.1in}
    \caption{KLD vs Parameters 'K' and 'a' when $(P_d,P_{fa}) = (0.8,0.2)$, $C_{budget}^{network}=400000$, $C_{budget}^{attacker}=50$ and
$N_{min}=1400$}\label{opt}
  \vspace*{-0.15in}
\end{figure}
Next, to gain insights into the solution, we present some numerical results in Figure~\ref{opt} that corroborate our theoretical results.
We plot the $\underset{P_{1,0},P_{0,1}}{\text{min}}$ KLD for all the combinations of parameter $K$ and $a$ in the tree. We vary the parameter $K$ from $2$ to $10$ and $a$ from $3$ to $11$. We also assume that the costs to attack nodes at different levels are given by $[c_1,\cdots,c_{10}]=[52,\;50,\;25,\;24,\;16,\;10,\;8,\;6,\;5,\;4]$, and cost budgets of the network and the attacker are given by $C_{budget}^{network}=400000$,
$C_{budget}^{attacker}=50$, respectively. The node budget constraint is assumed to be $N_{min}=1400$.
For each $T(K,\;a)$, we find the optimal attacking set $\{B_k\}_{k=1}^{K}$ by an exhaustive search. All the feasible solutions are plotted in red and unfeasible solutions are plotted in blue. Notice that, $T(K_{min},\;a_{max})$ which is $T(2,\;11)$ is not a feasible solution and, therefore, if we use Algorithm~\ref{robust} it will try to find the feasible solution which has minimum possible deviation from $T(K_{min},\;a_{max})$. It can be seen that the optimal solution $T(3,\;11)$ has minimum possible deviation from the $T(K_{min},\;a_{max})$, which corroborate our algorithm.

%
\section{Conclusion}
\label{sec6}
In this paper, we have considered distributed
detection in perfect $a$-ary tree topologies in the presence of Byzantines, and characterized the power of attack analytically. We provided closed-form expressions for minimum attacking power required by the Byzantines to blind the FC. We obtained closed form expressions for the optimal attacking strategies that minimize the detection error exponent at the FC. We also looked at the possible counter-measures from the FC's perspective to protect the network from these Byzantines. We formulated the robust
topology design problem as a bi-level program and provided an efficient algorithm to solve it. 
There are still many interesting questions that
remain to be explored in the future work such as an analysis of the problem for arbitrary topologies.
Note that, some analytical methodologies used in this paper are certainly exploitable for studying the attacks in different topologies. Other questions such as the case
where Byzantines collude in several groups (collaborate) to degrade the detection performance
can also be investigated.
\section*{Acknowledgement}
This work was supported in part by ARO under Grant W911NF-09-1-0244 and AFOSR under Grants FA9550-10-1-0458, FA9550-10-1-0263.

\appendices
\section{}
\label{ap2}
We want to show that the set $\{B_k\}_{k=1}^{K}$ can blind the FC if any of following two cases is true.\\
1. $min(B_k,\;N_k)=N_k$ for any $k$,\\
2. $\{B_k\}_{k=1}^{k=K}$ is an overlapping set\\
In other words, set $\{B_k\}_{k=1}^{K}$ covers $50\%$ or more nodes. Let us denote by $\tilde{k}$, the $k$ for which  $min(B_k,\;N_k)=N_k$ (there can be multiple such $k$). Then $\{B_k\}_{k=1}^{K}$ satisfies 
\begin{center}
\begin{equation}
\label{min}
\frac{\sum_{k=1}^{K}P_{k}B_k}{\sum_{k=1}^{K}N_k}\geq\frac{P_{\tilde{k}}B_{\tilde{k}}}{\sum_{k=1}^{K}N_k}\geq\frac{P_{\tilde{k}}N_{\tilde{k}}}{\sum_{k=1}^{K}N_k}\geq\frac{P_{K}N_{K}}{\sum_{k=1}^{K}N_k}.
\end{equation}
\end{center}
Similarly, let us assume $B_{k'}$ and $B_{\tilde{k}}$ are overlapping with $\tilde{k}=k'+x$ (there can be multiple overlapping $k$). Then $\{B_k\}_{k=1}^{K}$ satisfies
\begin{center}
\begin{equation}
\label{over}
\frac{\sum_{k=1}^{K}P_{k}B_k}{\sum_{k=1}^{K}N_k}\geq\frac{P_{\tilde{k}}B_{\tilde{k}}+P_{k'}B_{k'}}{\sum_{k=1}^{K}N_k}\geq\frac{P_{\tilde{k}}N_{\tilde{k}}}{\sum_{k=1}^{K}N_k}\geq\frac{P_{K}N_{K}}{\sum_{k=1}^{K}N_k}.
\end{equation}
\end{center}
Observe that, to prove our claim it is sufficient to show that
\begin{equation}
\label{pro}
\frac{P_{K}N_{K}}{\sum_{k=1}^{K}N_k}\geq 0.5 \Leftrightarrow P_{K}N_{K}\geq \frac{N}{2}.
\end{equation}
Using the fact that for a Perfect $a$-ary tree $P_K=1$, $N_K=a^K$ and $N=\frac{a(a^K-1)}{a-1}$ the condition~\eqref{pro} becomes
\begin{equation}
\label{eq-app3}
2\times a^K\geq \frac{a(a^K-1)}{a-1}. 
\end{equation}
When $a\geq 2$, we have
\begin{eqnarray*}
&&
a \times a^K \geq 2\times a^K\\
&\Leftrightarrow&
a+a^{K+1} \geq 2\times a^{K}\\
&\Leftrightarrow&
2\times a^{K+1}-2\times a^K\geq a^{K+1}-a\\
&\Leftrightarrow&
2\times a^K\geq \frac{a(a^K-1)}{a-1}.
\end{eqnarray*}
Hence, \eqref{pro} holds and this completes our proof.
\section{}
\label{ap1}
We skip the proof of \eqref{ineq2} and only focus on the proof of \eqref{ineq1}. 
To show 
$$(a+1)^{K+1}[(a)^{K-k+1}-1]-(a)^{K+1}[(a+1)^{K-k+1}-1]\geq 0\mbox{ for $a\ge 2$}$$
is equivalent to show
$$a^{K+1}[(a-1)^{K-k+1}-1]-(a-1)^{K+1}[a^{K-k+1}-1]\geq 0\mbox{ for $a\ge 3$}$$
which can be simplified to
\begin{equation}
\label{eq-app}
(a(a-1))^{K-k+1}[a^k-(a-1)^k]\geq [a^{K+1}-(a-1)^{K+1}].
\end{equation}
Using binomial expansion, \eqref{eq-app} becomes
\begin{footnotesize} 
\begin{eqnarray}
&&
(a(a-1))^{K-k+1}[a^{k-1}+(a-1)a^{k-2}+\cdots+(a-1)^{k-1}]\geq\nonumber\\
& &
[a^{K}+(a-1)a^{K-1}+\cdots+(a-1)^{K-1}a+(a-1)^{K}]\nonumber\\
&\Leftrightarrow &
\underbrace{(a-1)^{K-k+1}[a^{K}+(a-1)a^{K-1}+\cdots+(a-1)^{k-1}a^{K-k+1}]}_{\text{k terms}}\geq\nonumber\\
& &
\underbrace{[a^{K}+(a-1)a^{K-1}+\cdots+(a-1)^{k-1}a^{K-k+1}]}_{\text{k terms}} +\nonumber\\
& &
\underbrace{[(a-1)^{k}a^{K-k}+\cdots+(a-1)^{K-1}a+(a-1)^{K}]}_{\text{K-k+1 terms}}\nonumber\\
&\Leftrightarrow &
((a-1)^{K-k+1}-1)[a^{K}+\cdots+(a-1)^{k-1}a^{K-k+1}]\geq\nonumber\\
& &
[(a-1)^{k}a^{K-k}+\cdots+(a-1)^{K-1}a+(a-1)^{K}].\label{eq-app2}
\end{eqnarray}
\end{footnotesize}
Since $a\ge 3$, we have $((a-1)^{K-k+1}-1)\geq (K-k+1)\geq 1$. Hence,
\begin{footnotesize}
\begin{equation*}
((a-1)^{K-k+1}-1)[a^{K}+\cdots+(a-1)^{k-1}a^{K-k+1}]\geq
\end{equation*}
\begin{equation}
\label{inequ}
((a-1)^{K-k+1}-1)a^{K}\geq \underbrace{[(a-1)^{k}a^{K-k}+\cdots+(a-1)^{K}]}_{\text{K-k+1 terms}}
\end{equation}
\end{footnotesize}
and \eqref{eq-app2} holds.

\bibliographystyle{IEEEtran}
\bibliography{Conf,Book,Journal}
\end{document}